\def\BibTeX{{\rm B\kern-.05em{\sc i\kern-.025em b}\kern-.08em
    T\kern-.1667em\lower.7ex\hbox{E}\kern-.125emX}}
\newtheorem{theorem}{Theorem}
\newtheorem{remark}{Remark}
\newtheorem{assumption}{Assumption}
\newtheorem{definition}{Definition}
\newtheorem{corollary}{Corollary}
\newtheorem{lemma}{Lemma}
  \providecommand\BibTeX{{%
    \normalfont B\kern-0.5em{\scshape i\kern-0.25em b}\kern-0.8em\TeX}}}
\begin{document}

\author{Yuetai Li, Zhangchen Xu, Yiqi Wang, Zihan Zhou, Lei Zhang,~\IEEEmembership{Senior~Member,~IEEE} and Jon Crowcroft,~\IEEEmembership{Fellow,~IEEE}

\thanks{Yuetai Li, Zhangchen Xu, are with Department of Electrical \& Computer Engineering,
University of Washington, 1410 NE Campus Pkwy, Seattle, WA, United States (e-mail: yuetaili@uw.edu, zxu9@uw.edu).}

\thanks{Yiqi Wang and 
Lei Zhang (corresponding author) are with James Watt School of Engineering, University of Glasgow, Glasgow, G12 8QQ, United Kingdom (e-mail: 2720884W@student.gla.ac.uk, Lei.Zhang@glasgow.ac.uk).}

\thanks{Zihan Zhou is with Information Hub of HKUST, Hong Kong University of Science and Technology, No.1 Du Xue Rd, Nansha District, Guangzhou (e-mail: zzh45472@gmail.com).}

\thanks{Jon Crowcroft is with Computer Lab, University of Cambridge, Cambridge, CB2 1TN, United Kingdom (e-mail: jon.crowcroft@cl.cam.ac.uk).}

}

\title{Distributed Consensus Network: A Modularized Communication Framework and Reliability Probabilistic Analysis}

\maketitle

\begin{abstract}
In this paper, we propose a modularized framework for communication processes applicable to crash and Byzantine fault-tolerant consensus protocols. We abstract basic communication components and show that the communication process of the classic consensus protocols such as RAFT, single-decree Paxos, PBFT, and Hotstuff, can be represented by the combination of communication components. 
Based on the proposed framework, we develop an approach to analyze the consensus reliability of different protocols, where link loss and node failure are measured as a probability. 
We propose two latency optimization methods and implement a RAFT system to verify our theoretical analysis and the effectiveness of the proposed latency optimization methods. We also discuss decreasing consensus failure rate by adjusting protocol designs. This paper provides theoretical guidance for the design of future consensus systems with a low consensus failure rate and latency under the possible communication loss.
\end{abstract}

\begin{IEEEkeywords}
Distributed Consensus, Wireless Communication, Network Reliability, Fault Tolerance, Internet of Things
\end{IEEEkeywords}

\section{Introduction}

Distributed consensus, a fundamental concept of distributed systems, ensures that independent nodes agree on the same value to perform a certain task through message exchange within the network, in spite of the presence of faulty nodes. 
Characterized by distinct failure behaviors,  two failure models exist: crash failure, where a faulty node abruptly stops working without resuming \cite{184040, lamport2001paxos}, and Byzantine failure, where a faulty node acts arbitrarily in order to maximally damage the consensus \cite{castro1999practical,yin2019hotstuff}. Protocols that can resist crash failures and Byzantine failures are called crash fault-tolerant (CFT) and Byzantine fault-tolerant (BFT) protocols.

The consensus has been extensively researched and applied in traditional fields such as distributed databases for ensuring safety and liveness. In the research of these applications, most of consensus protocols were initially proposed in reliable communication networks, where there is a known or unknown upper bound on transmission delays and all messages are guaranteed to be delivered within this bound.

Recently, emerging distributed architectures such as blockchain networks have led to a resurgence in distributed consensus. 
\textcolor{black}{Some recent work has studied the applications of wireless consensus systems including blockchain-enabled IoT and autonomous systems \cite{9003220, 10041971, 10000789, 9119383, 8977452, luo2023symbiotic, 10414128}. \cite{9003220, 10041971,  luo2023symbiotic} focus on consensus in critical mission scenarios, e.g., autonomous driving systems wherein vehicles and pedestrians go through intersections, IoT environments where terminals (drones, sensors, and actuators) act based on coordinated behaviors. \cite{10000789} implements wireless RAFT on embedded system to make sensing data and action orders consistent in the Internet of Vehicles. \cite{9119383, 8977452, 10414128} consider blockchain-enabled IoT where IoT devices in the wireless mobile-edge network are supported by the set of blockchain peers that process IoT transactions.} 

\textbf{Motivation.}
In these applications such as blockchain-enabled IoT and Internet of Vehicles,  wireless communication is essential to support consensus operations in large-scale systems and facilitate connections among IoT devices and other clients \cite{SeoHyowoon2021CaCC, XuHao2020RBWB}. However, wireless transmission links might be unreliable and link loss may occur due to channel fading or spectrum jamming \cite{VojcicB.R1989Pods, TehK.C1998Mjro}. 
Even worse, wireless scenarios may provide an additional attack surface for Byzantine adversaries. In addition to classical Byzantine behaviors like deception or keeping silent, Byzantine nodes in wireless networks can further interfere with the communication of other participants (e.g., through broadband noise jamming or selective jamming), resulting in a compromised quality of communication. 

Such unreliable communications introduce new challenges. \textcolor{black}{As the consensus protocol depends on message exchanges to enhance interconnectivity among nodes and attain consistency, compromised communication can significantly impair the reliability of the distributed system.} 
The failure in communication links may lead to a failure in achieving consensus instances, ultimately compromising the overall performance of the system, e.g., a significant increase in latency. In Figure \ref{fig: intro_fig}, we find that a small reduction of quality in communication (link loss rate from 0 to 0.05) could greatly increase the latency of the consensus system. 

\begin{figure}[htbp]
  \centering 
  \includegraphics[width=7cm, height = 4.5cm]{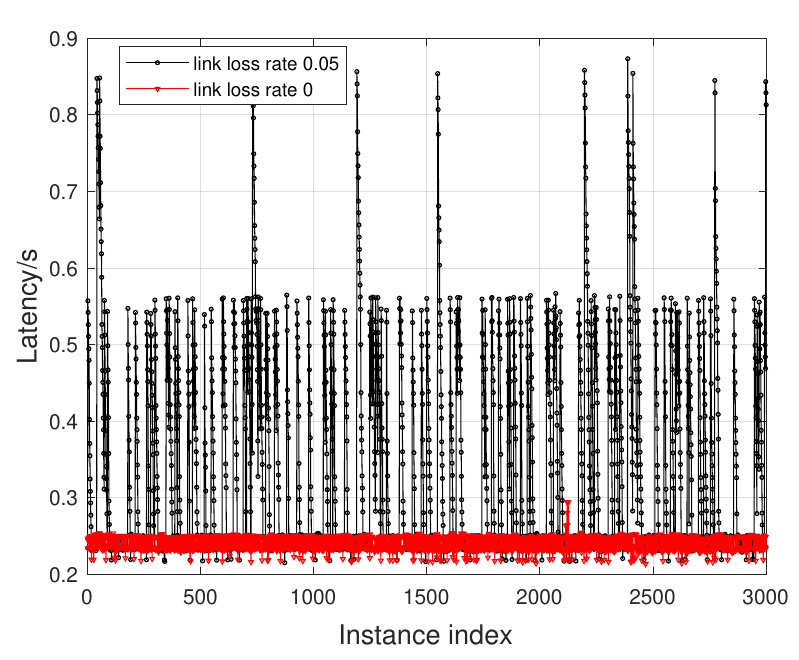}
  \caption{The comparison of latency in the RAFT between the link loss rate of 0 and 0.05. The system consists of 6 followers and 1 leader with 2 followers crashed. The latency is calculated from the instance's arrival at the leader to its commitment by the leader. The single link latency is set as 0.1s. The instance arrival rate is set as 60 per second.}
  \label{fig: intro_fig} 
\end{figure}

In light of finding this, our paper's motivation lies in the crucial need for modeling the performance of distributed consensus protocols under compromised communication modules such as link loss. \textcolor{black}{Regarding many applications of using consensus protocols such as RAFT and PBFT in wireless setting \cite{9003220, 10041971, 10000789, luo2023symbiotic}, this is critical to evaluate how reliable the existing protocols are when they are used in the wireless network \cite{SeoHyowoon2021CaCC, XuHao2020RBWB}.} However, since consensus protocols have different communication processes and protocol specifications such as PBFT \cite{castro1999practical} and Hotstuff \cite{yin2019hotstuff}, the analytical method of one protocol may not be able to extend to other protocols \cite{SeoHyowoon2018CDDS, 9003220}.

Therefore, this paper aims to propose a modularized framework for representing communication structures across a wide range of consensus protocols.

\textbf{Our Contribution.}
In this paper, we propose a modular approach based on a set of fundamental communication processes (which we refer to as communication components henceforth) and utilize them to represent the communication of various protocols.
Subsequently, the protocol analysis can attain a higher level of scalability, thereby facilitating the evaluation of the communication process for extant protocols.
Furthermore, we evaluate probabilistic consensus-achieving rates, enabling comparisons between existing protocols and giving theoretical guidance into the designs of future protocols and systems. The main contributions of this paper are summarized as follows:

\begin{enumerate}
    \item  We propose a modularized framework to represent the communication process of various consensus algorithms. Specifically, we present fundamental communication components in consensus algorithms and employ their combinations to represent some well-established protocols, such as RAFT \cite{184040}, single-decree Paxos \cite{lamport2001paxos}, PBFT \cite{castro1999practical}, and Hotstuff \cite{yin2019hotstuff}.
    
    \item Utilizing the proposed modularized framework, we examine consensus reliability in the context of potential node and link failures through a probabilistic lens. \textcolor{black}{We further propose the concepts of Reliability Gain and Tolerance Gain for different protocols, which shows that the consensus failure rate has linear relationships with two fundamental system parameters, i.e., the overall joint failure rate and the maximum number of faulty nodes the system can tolerate.}

    \item We show that a larger consensus failure rate leads to higher transmission and queuing latency. We subsequently proposed two methods based on our theoretical analysis of consensus failure rate to decrease the consensus failure rate, and thus optimize the system latency. We also discuss decreasing consensus failure rate by adjusting protocol designs.

    \item \textcolor{black}{We implement a RAFT consensus system as an example to verify our theoretical analysis and show the effectiveness of the proposed latency optimization methods. Additionally, we demonstrate the numerical results of the consensus failure rate for different consensus protocols.}
    
\end{enumerate}

The structure of this paper is as follows. The related work is shown in Section \ref{sec: related work}. Section \ref{Communication Structures of Consensus Protocols} proposes a generic and modularized framework to represent the communication structure of different consensus protocols. Based on this framework, the probability analysis of consensus reliability for different protocols and the concepts of Reliability Gain and Tolerance Gain are proposed in Section \ref{rv_derivation}. 
Section \ref{sec: consensus reliability and latency} shows a larger consensus failure rate causing higher latency and proposed latency optimization methods.  Section \ref{sec: experiments} implements a Raft system as an example to verify the theoretical analysis.
Section \ref{Discussion: Improve Consensus Reliability by Protocol Design} discusses improving consensus failure rate by adjusting protocol designs. Finally, Section \ref{sec: conclusion and future work} makes a conclusion.

\section{Related Work}
\label{sec: related work}

\subsection{Distributed Consensus:}
\textcolor{black}{Most consensus protocols were initially proposed in reliable communication networks \cite{10.1145/357172.357176,doi:10.1137/0212045,feldman1997optimal,10.1145/42282.42283,184040,lamport2001paxos,castro1999practical,yin2019hotstuff}. They follow classical synchronous or partial synchronous communication model, where there is a known or unknown upper bound on message transmission delays between nodes and all messages are guaranteed to be delivered within this time bound.} As for unreliable link communications such as link loss, some work \cite{doi:10.1137/S009753970443999X, 6312888,1238089} prove the properties of agreement, termination, and validity in designing protocols. \cite{10.1145/3447851.3458739} also models the partial network partition of RAFT and analyzes possible outage caused by possible link loss. However, the analysis often treats links in a deterministic manner, i.e., either faulty or non-faulty. There is few probabilistic analysis of link and consensus reliability. 
In \cite{doi:10.1137/S009753970443999X}, assumption coverage has been proposed as a measure of whether the probability of meeting the required conditions for protocols converges to 1, which is a meaningful measure for assessing the quality of protocols. Nonetheless, this work provides only a vague evaluation of consensus reliability boundaries as the number of nodes approaches infinity and assumes identical link failure rates for different links. 

\subsection{Availability of Quorum System: } Some work analyze the availability of quorum system \cite{Peleg1995TheAO, doi:10.1137/S0097539797325235}. This characterizes the likelihood that a quorum system will be able to deliver service, taking into account the failure probability of various elements within the system. While many consensus algorithms can be represented as quorum-based systems, quorum system availability may not be used to assess the consensus networks reliability, as it usually does not account for communication link failures within the network.

\subsection{Wireless Consensus Applications: }Some recent works have proposed many applications of wireless distributed consensus and analyzed the system performance \cite{9003220, AsheralievaAlia2020RCFf, SeoHyowoon2018CDDS, KimHyesung2020BOFL, LiuYinqiu2019mALB, WangWenbo2019ASoC, ZhangLei2021HMCR, 10041971}. 
\cite{9003220, 10041971, luo2023symbiotic} focus on consensus in critical mission scenarios, e.g., autonomous driving systems wherein vehicles and pedestrians go through intersections, IoT where terminals (drones, sensors, and actuators) act based on coordinated behaviors. \cite{10000789} implements RAFT based on embedded system to make sensing data and action orders consistent in the Internet of Vehicles. 
\cite{XuHao2020RBWB, SeoHyowoon2021CaCC} 
has analyzed the impact of wireless link reliability on consensus reliability and latency, but their models target the geographic distribution of nodes for the specific protocol.
\cite{9184829,9990047, Li} has investigated RAFT consensus reliability in a wireless network to show critical decision-making with different link reliability in IoT. However, the proposed result was only for RAFT.

\section{Communication Structures of Consensus Protocols}
\label{Communication Structures of Consensus Protocols}
In this section, we propose a modularized framework to model the communication process of different consensus protocols. We first give the models and assumptions, and propose basic consensus communication components. Then the communication structure of typical consensus algorithms including RAFT, PBFT and Hotstuff could be represented by combinations of basic communication components.

\subsection{Models and Assumptions}
\label{sec: models and assumptions}
The consensus network consists of nodes communicating with each other to achieve consistency based on the designed protocols.
A faulty node exhibits unexpected or incorrect behavior, which can be due to hardware malfunctions, software bugs, attacks, or network issues. A non-faulty node, on the other hand, is a node that functions correctly and as expected. The primary-backup paradigm is a common terminology in distributed consensus protocols, where one node is the primary who coordinates the consensus process, while the other nodes, known as backups, update states to maintain overall system consistency and fault tolerance. In our paper, we denote number of backups as $n$ and the maximum number of faulty nodes the consensus protocol can tolerate as $f$. In most CFT protocols, $f=\lfloor n/2 \rfloor$, and in most BFT protocols, $f=\lfloor n/3 \rfloor$.

The consensus protocols typically have a normal path (i.e., \textit{normal case operation}), following a propose-vote paradigm. When the primary misbehaves, there is another recovery path (often called \textit{view change}) to rotate the primary. 
Following \cite{Li, XuHao2020RBWB, 9003220}, we analyze the network reliability of normal case operations. 
If the primary node fails, the protocol will automatically perform a view change \cite{184040,lamport2001paxos,castro1999practical,yin2019hotstuff}, find a new primary node, and continue with normal case operations.

Regarding node failure, we adopt the classical static corruption model \cite{shi2020foundations} which assumes that the set of faulty nodes is chosen at the start of the protocol but the protocol is unaware of which nodes are corrupt. We assume that all backups are not 100\% reliable (e.g., limited by cost, size, and complexity), thus have a probability to fail in a consensus network. \textcolor{black}{In addition,} faulty nodes are assumed to be not recovered after the failure.

For the communication links, we assume each link in the consensus process will loss as probability in a wireless connected network \cite{SeoHyowoon2021CaCC, Li, 9003220}. This probability setting aligns more closely with real-world wireless networks, since in classic wireless research the link reliability is commonly modeled as a probability issue \cite{1374908}. Regarding the necessary wireless setting in many consensus applications such as blockchain-enabled IoT \cite{9003220, 10041971, 10000789, luo2023symbiotic}, probability modeling of link loss is critical to evaluate the reliability of existing protocols \cite{SeoHyowoon2021CaCC, Li}.
Note that this is also necessary in the BFT setting. Although the BFT assumption includes the scenario where Byzantine nodes may deliberately lose messages, it is still necessary to model the link loss among the remaining honest nodes.

\subsection{Node Activation}
\label{sec: Node Activation}
A consensus protocol consists of a few phases (aka, step or stage represented in different protocols) of message exchange, where different protocols have various communication phases. Within each phase, a non-faulty node usually has a local verification condition to check the validity of the received message and decide whether to proceed to the next phase of normal case operations. For example, in the prepare phase of PBFT, a non-faulty node proceeds only if it collects more than 2/3 node's correct and valid messages from other nodes.

We define the concept of \textbf{Activated Node (AN)} as follows, which characterizes the procedure of local verification for entering the next phase.
\begin{definition}[Activated Node]
A node will become an activated node if it has met the protocol condition to proceed and follows the normal case operation. 
\end{definition}

We will also call a node \textbf{Inactivated Node (IN)} or not becoming an activated node if it fails to meet the protocol conditions (e.g., not receiving enough required messages from other nodes or primary), or fails to follow the normal case operation in this phase. \textcolor{black}{Particularly, according to the definition of AN and the static corruption model, if we denote the start of the protocol as phase 0, non-faulty nodes could be regarded as ANs in phase 0 and faulty nodes could be regarded as IN in phase 0.}

\subsection{Basic Communication Components}
\label{sec: Basic Communication Components}
\textcolor{black}{In this section, we use a set of fundamental communication processes (which we refer to as communication components henceforth) to represent the communication of message exchange in each phase of the consensus protocols.} 

\begin{figure}[htbp]
  \centering 
  \includegraphics[width=0.3\textwidth, height = 5cm]{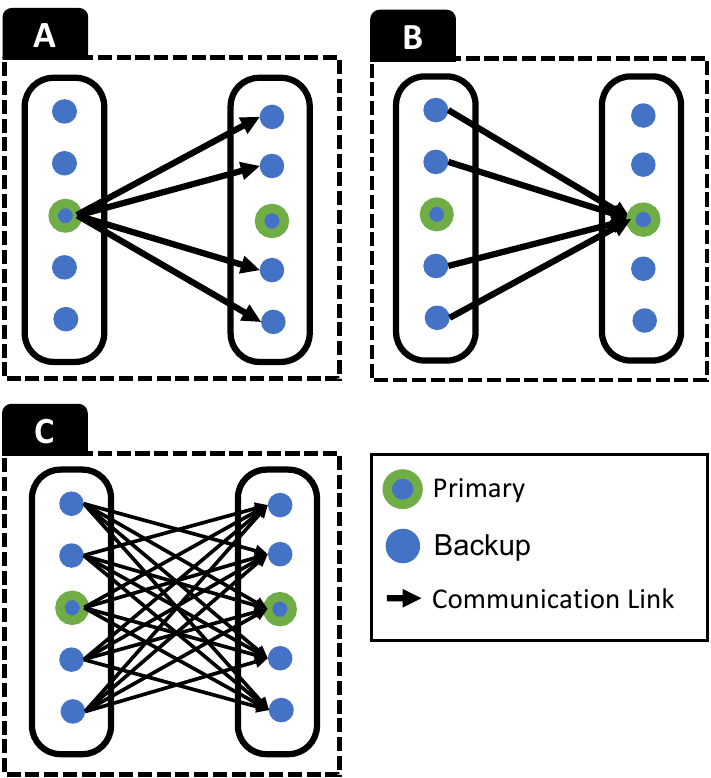}
  \caption{\textcolor{black}{Three Representative Communication Designs in Consensus Protocols}}
  \label{fig: communication graph} 
\end{figure}

We show three representative communication graphs in Figure \ref{fig: communication graph}, i.e., one-to-many, many-to-one, and many-to-many.
We will call them A, B, and C respectively in the remainder of our paper. 
Let ${(T_j)}_{r_j}^{M_j}, T_j\in \{A,B,C\}$ denote a \textbf{communication component} that represents the communications of message exchange at the $j$-th phase of the consensus protocols. The notation is explained as follows:
\begin{enumerate}
    \item $M_j$ represents that the protocol requires at least $M$ activated nodes at the end of the $j$-th phase. Usually, $M_j$ takes either $n - f$ or $f + 1$.
    \item $r_j$ represents the phase dependence relationship. Only those nodes that have been activated in the $(j-r_j)$-th phase 
    could be activated in the $j$-th phase. This implies that the $j$-th phase is dependent on the $(j-r_j)$-th phase.
    \item $T_j$ represents the communication graph in the $j$-th phase, i.e., one-to-many for $T_j=A$, many-to-one for $T_j=B$, and many-to-many for $T_j=C$. Different $T_j$ also has various conditions of node activation as:
    
\begin{itemize}
    \item $T_j=A$: the node should successfully receive and verify the message from the primary.
    \item $T_j=B$: the node should successfully send the message to the primary. 
    \item $T_j=C$: the node should successfully receive and verify no less than $n-f-1$ messages from other nodes that have been activated in the last phase.
\end{itemize}
    
\end{enumerate}

\textcolor{black}{These three parameters indicate the relationship between a communication component and node activation. $M_j$ shows the minimum number of ANs required in the phase, while $r_j$ and $T_j$ show the conditions of a node to become AN in the phase. We call that $r$ represents \textit{Node Activation Condition I} and $T$ represents \textit{Node Activation Condition II}. The following will give more explanation of these three parameters.} 

\begin{table*}[ht]
\centering
\renewcommand\arraystretch{1.4}
\caption{Summary of communication structures of typical consensus algorithms.}
\begin{tabular}{|c|c|c|}
\hline
\textbf{Protocol} & \textbf{Category} & \textbf{Communication Structure $G$} \\
\hline
Single-decree Paxos & CFT & $G=[A_{1}^{n-f},B_{1}^{n-f},A_{3}^{n-f},B_{1}^{n-f}]$  \\
\hline
Raft & CFT & $G=[A_1^{n-f},B_1^{n-f}]$ \\
\hline
\textcolor{black}{PBFT} & BFT & $G=[A_{1}^{n-f},C_{1}^{n-f},C_{1}^{f+1},B_{1}^{f+1}]$ \\
\hline
Hotstuff & BFT & $G=[A_1^{n-f},B_1^{n-f},A_2^{n-f},B_1^{n-f},A_2^{n-f},B_1^{n-f},A_2^{n-f},B_1^{f+1}]$ \\
\hline
\end{tabular}
\label{table: popular consensus algorithm summary}
\end{table*}

We use the example of PBFT to explain $M_j$. \textcolor{black}{In the \textit{prepare} phase of PBFT, at least $n-f-1$ messages from others are needed to be received by one node. This requires there should be at least $n-f$ activated nodes at the end of the \textit{pre-prepare} phase (the phase before \textit{prepare} phase). Thus $M_1=n-f$ in \textit{pre-prepare} phase.} Differently, only $f+1$ activated nodes are needed at the end of the \textit{reply} phase to respond to the client. Therefore, for the \textit{reply} phase $M_4=f+1$. Please refer to Appendix \ref{sec: structure of PBFT} for more PBFT protocol details.

$r_j$ implies that the current phase is dependent on the $(j-r_j)$-th phase. Although naturally a node would be activated if it satisfies the Node Activation Condition II based on $T_j$, we emphasize that $r_j$ and the Node Activation Condition I are necessary, i.e., ANs in the current phase are required to have been ANs in the $(j-r_j)$-th phase. The reason behind this is that some protocols require strict verification of incoming messages that rely on previous phases. If the node suffers link loss in the previous phases (i.e., the node has not been activated previously), there will be no adequate information for it to meet the protocol condition to proceed. 

Take PBFT, Hotstuff, and single-decree Paxos as examples. In PBFT, the \textit{prepare} phase requires nodes to have received \textit{pre-prepare} messages from the primary to verify the \textit{prepare} messages. Since the \textit{prepare} phase relies on the \textit{pre-prepare} phase that is 1 phase before, $r_2$ is taken as 1. 
In Hotstuff, if a backup has not received the \textit{pre-commit} message from the primary yet but received the \textit{commit} message from the primary, the type of the received message cannot match the required message type so that the node cannot be activated. Here $r_3$ is taken as 2 since there is a many-to-one component and a one-to-many component between \textit{pre-commit} messages and \textit{commit} messages.
One more example is the \textit{Propose} phase of single-decree Paxos. As long as nodes are non-faulty, they can proceed when receiving the \textit{Propose} message from the proposer even though they missed the \textit{Prepare} message before. \textcolor{black}{Recall that we regard non-faulty as ANs in phase 0 (the start of the protocol) and faulty nodes as INs in phase 0. Since the \textit{Propose} phase is the third phase of single-decree Paxos, $r_3$ could be taken as 3 in this phase.}  Please see Appendix \ref{apd: Communication Structure of Representative Consensus Algorithms} for more protocol details and explanations.

\subsection{Communication Structure of Representative Consensus Algorithms}
\label{Communication Structure of Representative Consensus Algorithms}

To our knowledge, using the combinations of communication components introduced in Section \ref{sec: Basic Communication Components} can well represent the communication process of many consensus algorithms. As shown in Table \ref{table: popular consensus algorithm summary}, several examples of typical consensus algorithms are given including RAFT \cite{184040}, Paxos \cite{lamport2001paxos}, PBFT \cite{castro1999practical}, and Hotstuff  \cite{yin2019hotstuff}. The detailed analysis and explanation\footnote{We consider the communication structure of proposed protocols in the original papers as our analyzing objects, while the communication structure of modified versions might differ from our results. However, the analyzing way is similar.} are shown in Appendix \ref{apd: Communication Structure of Representative Consensus Algorithms}.

\section{Consensus Reliability Analysis}
\label{rv_derivation}
For real-world applications, especially in the engineering field, it is unrealistic to analyze the link in a deterministic manner, i.e., either faulty or non-faulty, since all links might not be secured. From the probability view, this section analyzes a fundamental concept for consensus protocols: \textbf{consensus reliability} \cite{Li, 9990047}. This is defined as the probability of consensus success for a consensus protocol when each node has a probability to become faulty in the system initialization, and each link has a probability to become loss. Based on the communication components introduced in Section \ref{Communication Structures of Consensus Protocols}, we propose a theoretical framework to analyze consensus reliability for different protocols. The calculated consensus reliability could be considered as the lower bound of the rate of consensus achieving, i.e., Byzantine nodes are assumed to cause the worst-case scenarios. \textcolor{black}{We should mention that our proposed consensus reliability does not aim at replacing a direct design and proof framework of a distributed consensus algorithm. However, it is fundamental for reliability analysis and engineering evaluation in the consensus network with unreliable nodes and links.}

\subsection{Notations}
\label{Representation by node reliability or link reliability}
\textcolor{black}{Let $G_j={(T_j)}_{r_j}^{M_j}$} denote the $j$-th phase of a protocol $G$, where $T_j,r_j,M_j$ follows the definitions in Section \ref{sec: Basic Communication Components} and $j$ satisfies $j\le|G|$. Here $|G|$ represents the number of communication components of protocol $G$. 

Let $\Omega=\{Node_1,Node_2,\ldots,Node_n\}$ represent the set of $n$ backups in the cluster. When analyzing the consensus process under possible node failure and link failure, it is important to consider the set of activated nodes in each phase. We denote $ S^{G_j}_{x_j}\subseteq\mathrm{\Omega}$, with the cardinality $\left|S_{x_j}^{G_j}\right|=x_j$, as the set of activated nodes
at the end of the $j$-th phase of the protocol $G$.
Let the probability of node $i$ becoming AN at the end of the $j$-th phase be $P_i^{G_j}$, where $i$ represents the $Node_i \in \Omega$. We could also call $P_i^{G_j}$ as the \textbf{Activation Probability} of node $i$ in the $j$-th phase.

The following shows \textbf{how the Activation Probability $P_i^{G_j}$ could be expressed by Node/Link failure rate}: 

When $j=0$, recall that $S^{G_0}_{x_0}$ represents the set of non-faulty nodes in the system. Thus $P_i^{G_0}$ is the probability that node $i$ is non-faulty. For $j=1, 2...,|G|$, according to the Node Activation Conditions I and II in Section \ref{sec: Basic Communication Components}, the following will obtain the expression of $P_i^{G_j}$ when $T_j$ is the graph A, B, and C respectively.

\textbf{One-to-Many (A):}  The nodes in the $j$- th phase could be activated if they are in the set $S_{x_{j-r_j}}^{G_{j-r_j}}$ and they successfully receive the message from the primary. Thus we can express $P_i^{G_j}$ as
$P_i^{G_j}=P^{primary,i}_{L}$, where $P^{primary,i}_{L}$ is 
the reliability of communication link from the primary to the node $i$ which are in the set $S_{x_{j-r_j}}^{G_{j-r_j}}$. 

\textbf{Many-to-One (B):} According to the node activation condition of B mentioned in Section \ref{sec: Basic Communication Components}, the probability of becoming AN can be obtained as $P_i^{G_j}=P^{i,primary}_{L}$, where $P^{i,primary}_{L}$ represents the reliability of communication link from the node $i$ in the set $S_{x_{j-r_j}}^{G_{j-r_j}}$ to the primary.

\textbf{Many-to-many (C):} 
If the graph is C, $P_i^{G_j}$ is the probability that node $i$ in $S_{x_{j-r_j}}^{G_{j-r_j}}$ receives at least $n-f-1$ messages from other activated nodes in $S_{x_{j-r_j}}^{G_{j-r_j}}$.
Thus $P_i^{G_j}$ can be calculated as:
\begin{equation}
\label{rv_1A_eq_5}
\begin{split}
P_i^{G_j}=&\sum_{\substack{x_{j-r_j}-1\geq k\geq n-f-1\\S_k\subseteq (S_{x_{j-r_j}}^{G_{j-r_j}}\setminus i) }}{\prod_{u\in S_k} P^{u,i}_{L}}\prod_{v\in (S_{x_{j-r_j}}^{G_{j-r_j}}\setminus \{i,S_k\})}\left(1-P^{v,i}_{L}\right)
\end{split}
\end{equation}
where $P^{u,i}_{L}$ represents the reliability of communication link from the node $u$ to $i$, $\setminus$ is the setminus, and $k,u,v$ are running variables. $S_k$ is a running variable with $k=\left|S_k\right|$

\subsection{Main Theorem of Consensus Reliability}
\label{Markov Property of Consensus}
In this section, we obtain the theoretical consensus reliability in Theorem \ref{th1} according to the communication structures of different consensus protocols.

\textcolor{black}{We will consider the probability Pr(the set of activated nodes in j-th phase is $S^{G_j}_{x_j}$) and simplify the expression as $P(S^{G_j}_{x_j})$. Similarly, the conditional probability $P(S^{G_j}_{x_j}|S^{G_i}_{x_i})$ represents given the set of activated nodes in i-th phase as $S^{G_i}_{x_i}$, the probability that the set of activated nodes in j-th phase is $S^{G_j}_{x_j}$.}
\textcolor{black}{ According to the Node Activation Condition I in Section \ref{sec: Basic Communication Components}, 
it could be directly obtained that the set of activated nodes in $j$-th phase is only dependent with that in $(j-r_j)$-th phase. This implies Markov property and could be formally expressed as the following:}
\begin{lemma} 
 ANs in $j$-th phase is the subset of ANs in $(j-r_j)$-th phase, i.e.,
\label{rv_p_1}
\begin{equation}
S_{x_j}^{G_j} \subseteq S_{x_{j-r_j}}^{G_{j-r_j}}\ \ (x_{j}\le x_{j-r_j})
\end{equation}
\end{lemma}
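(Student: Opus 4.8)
The plan is to argue directly from the \emph{Node Activation Condition I} together with the definitions of the three communication graphs given in Section~\ref{sec: Basic Communication Components}. The statement $S_{x_j}^{G_j} \subseteq S_{x_{j-r_j}}^{G_{j-r_j}}$ is essentially the set-theoretic restatement of that condition, so the substance of the proof is simply to verify that prior membership in $S_{x_{j-r_j}}^{G_{j-r_j}}$ is a genuine prerequisite for activation in phase $j$, uniformly across every admissible $T_j$, and then to read off the cardinality inequality.

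First I would fix an arbitrary $Node_i \in S_{x_j}^{G_j}$, i.e.\ a node that is activated at the end of phase $j$, and show it must lie in $S_{x_{j-r_j}}^{G_{j-r_j}}$ by a case analysis on $T_j \in \{A,B,C\}$. For $T_j=A$ (one-to-many), activation requires the node to already belong to $S_{x_{j-r_j}}^{G_{j-r_j}}$ and to receive the primary's message; for $T_j=B$ (many-to-one), it requires membership in $S_{x_{j-r_j}}^{G_{j-r_j}}$ and a successful transmission to the primary; for $T_j=C$ (many-to-many), it requires membership in $S_{x_{j-r_j}}^{G_{j-r_j}}$ and receipt of at least $n-f-1$ messages from other members of that same set. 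In each case the activation rule is a predicate whose domain is already confined to $S_{x_{j-r_j}}^{G_{j-r_j}}$, and the additional Node Activation Condition~II (the link-success requirement) can only further shrink the admissible set, never enlarge it. Hence $Node_i \in S_{x_{j-r_j}}^{G_{j-r_j}}$, and since $Node_i$ was arbitrary, the subset relation follows. The cardinality bound $x_j \le x_{j-r_j}$ is then immediate, as both sets are finite subsets of $\Omega$ and a subset of a finite set cannot be larger.

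I would also dispatch the boundary convention $j-r_j=0$ as a consistent special case: $S_{x_0}^{G_0}$ is the set of non-faulty nodes, and under the static corruption model faulty nodes are inactivated at phase $0$ and never recover, so no node outside $S_{x_0}^{G_0}$ can become activated at any later phase that depends on phase $0$. This matches the same subset relation with no separate argument needed.

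The step I expect to require the most care is not a deep argument but the bookkeeping across the three graph types: I must confirm that each activation condition is phrased as a restriction on $S_{x_{j-r_j}}^{G_{j-r_j}}$ rather than on the full vertex set $\Omega$, so that Condition~II cannot smuggle in a node from outside the previous activated set. Once this uniformity is verified, the Markov-type dependence asserted just before the lemma is justified, since the activated set at phase $j$ is then determined entirely by the activated set at phase $(j-r_j)$ and the independent link and activation events local to phase $j$.
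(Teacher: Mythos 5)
Your proof is correct and takes essentially the same route as the paper: the paper treats this lemma as an immediate consequence of Node Activation Condition I (the definition of $r_j$), which is exactly what you argue, with your case analysis over $T_j\in\{A,B,C\}$ merely unpacking that condition to confirm every activation rule is confined to $S_{x_{j-r_j}}^{G_{j-r_j}}$. The cardinality bound and the $j-r_j=0$ boundary case are handled the same way the paper implicitly does.
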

\begin{lemma}
\label{rv_p_2}
Given $S_{x_0}^{G_0},S_{x_1}^{G_1},\ldots,S_{x_{j -1}}^{G_{j-1}}$, ANs in the $j$-th phase only depends on ANs in the $(j-r_j)$-th phase, i.e.,
\begin{equation}
P\left(S_{x_j}^{G_j}\middle| S_{x_0}^{G_0},S_{x_1}^{G_1},\ldots,S_{x_{j -1}}^{G_{j-1}}\right)=P\left(S_{x_j}^{G_j}\middle| S_{x_{j-r_j}}^{G_{j-r_j}} \right)
\end{equation}
\end{lemma}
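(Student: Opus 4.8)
The plan is to reduce the Markov identity to the independence of the link-loss randomness drawn in phase $j$ from the randomness that determines every earlier phase. First I would decompose the event $\{S_{x_j}^{G_j}\}$ node by node. By Lemma \ref{rv_p_1} together with \emph{Node Activation Condition I}, a node $i$ can belong to $S_{x_j}^{G_j}$ only if $i\in S_{x_{j-r_j}}^{G_{j-r_j}}$; and for such an $i$, whether it is in fact activated is governed solely by \emph{Node Activation Condition II}, which refers only to the message-exchange outcomes of phase $j$ itself --- the single primary-to-$i$ link for $T_j=A$, the single $i$-to-primary link for $T_j=B$, or the incoming links to $i$ from the other members of $S_{x_{j-r_j}}^{G_{j-r_j}}$ for $T_j=C$ (cf. the formula \eqref{rv_1A_eq_5}). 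Hence, once the full set $S_{x_{j-r_j}}^{G_{j-r_j}}$ is given, the event $\{S_{x_j}^{G_j}\}$ is a deterministic function of the phase-$j$ link outcomes alone.

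Next I would invoke the two modeling assumptions of Section \ref{sec: models and assumptions}: node failures are fixed once at the start (static corruption), and each link loss is an independent probabilistic event. Under these, the collection of link outcomes used in phase $j$ is independent of the node-failure assignment and of every link outcome used in phases other than $j$. On the other hand, each conditioning variable $S_{x_0}^{G_0},\dots,S_{x_{j-1}}^{G_{j-1}}$ is a function only of the node failures together with the link outcomes from phases $0,1,\dots,j-1$, that is, of randomness disjoint from the phase-$j$ outcomes. Combining this with the previous step, given $S_{x_{j-r_j}}^{G_{j-r_j}}$ the target event depends only on a batch of fresh randomness independent of $\bigl(S_{x_0}^{G_0},\dots,S_{x_{j-1}}^{G_{j-1}}\bigr)$, so the equality of the two conditional probabilities follows from the factorization of the joint law.

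One point that needs care is why $S_{x_{j-r_j}}^{G_{j-r_j}}$ (rather than merely its cardinality $x_{j-r_j}$) is exactly the right statistic, and why nothing is lost by discarding the remaining phases. The reason is twofold: for the $T_j=C$ component the activation of $i$ depends on which \emph{specific} nodes are available to send to it, so the identity of the set, not just its size, is required; and because activation propagates forward, $S^{G_j}\subseteq S^{G_{j-r_j}}\subseteq\cdots\subseteq S^{G_0}$, knowing $S_{x_{j-r_j}}^{G_{j-r_j}}$ already certifies that its members are non-faulty, making the node-failure information carried by the earlier sets redundant for predicting $S_{x_j}^{G_j}$. I expect the main obstacle to be the bookkeeping that cleanly partitions the sample space into the phase-$j$ link block and its complement, and verifying that the dependency index $r_j$ never forces a reference to randomness outside phase $j$ once $S_{x_{j-r_j}}^{G_{j-r_j}}$ is given; once that partition is in place, the Markov identity is immediate from independence.
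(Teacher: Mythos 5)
Your proof is correct and takes essentially the same route as the paper: the paper justifies this lemma in one line, asserting that it follows directly from Node Activation Condition I (membership in $S_{x_{j-r_j}}^{G_{j-r_j}}$) together with the fact that Condition II involves only phase-$j$ link outcomes, and your argument is a careful formalization of exactly that observation. The only material you add is the explicit bookkeeping that phase-$j$ link randomness is independent of the node-failure assignment and of all earlier phases' link outcomes --- an assumption the paper leaves implicit in its model (and invokes only later, in the proof of Theorem \ref{th1}) but which is indeed the necessary ingredient for the conditional-probability identity.
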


If $r_j=1$ for any $j=1,...,|G|$, the Markov property will be first-order. Otherwise it will be high-order. Now we could obtain our main theorem of the theoretical consensus reliability as follow:
\begin{theorem}
\label{th1}
Given the communication structure $G$ of the protocol, the consensus reliability, denoted as $P_C$, can be calculated as:
\begin{equation}
\label{eq_th1}
\begin{split}
P_C=&\sum_{\substack{n\geq x_0 \geq n-f\\S_{ x_0}^{G_0}\subseteq \mathrm{\Omega}}}\!...\! \sum_{\substack{x_{j-r_j}\geq x_j \geq M_j\\S_{ x_j}^{G_j}\subseteq S_{x_{j-r_j}}^{G_{j-r_j}}}} 
\! ... \! \sum_{\substack{x_{|G|-r_{|G|}}\geq x_{|G|} \geq M_{|G|}\\S_{ x_{|G|}}^{G_{|G|}}\subseteq S_{x_{|G|-r_{|G|}}}^{G_{|G|-r_{|G|}}}}}
\\ &
P\left(S_{x_0}^{G_0},S_{x_1}^{G_1},...,S_{x_{\left|G\right|}}^{G_{\left|G\right|}}\right)
\end{split}
\end{equation}
where 
\begin{equation}
\label{rv_eq_0}
P\left(S_{x_0}^{G_0},S_{x_1}^{G_1},\ldots,S_{x_{\left|G\right|}}^{G_{\left|G\right|}}\right)=P\left(S_{x_0}^{G_0}\right)\prod_{j=1}^{\left|G\right|}P(S_{x_j}^{G_j}|S_{x_{j-r_j}}^{G_{j-r_j}})
\end{equation}
\begin{equation}
\label{rv_1A_eq_3}
P\left(S_{x_0}^{G_0}\right)=\prod_{u\in S_{x_0}^{G_0}} P_u^{G_0}\prod_{v\in\complement_\mathrm{\Omega} S_{x_0}^{G_0}}\left(1-P_v^{G_0}\right)
\end{equation}
\begin{equation}
\label{rv_1A_eq_4}
P\left(S_{x_j}^{G_j}\middle| S_{x_{j-r_j}}^{G_{j-r_j}}\right)=\prod_{u\in S_{x_j}^{G_j }} P_u^{G_j}\prod_{v\in\complement_{S_{x_{j-r_j}}^{G_{j-r_j}}}S_{x_j}^{G_j}}\left(1-P_v ^{G_j}\right)
\end{equation}
here $P_i^{G_j}$ is the activation probability of node $i$ in the $j$-th phase defined in Section \ref{Representation by node reliability or link reliability}, and $\complement$ is the notation of complement set.
\end{theorem}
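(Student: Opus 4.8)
The plan is to express the consensus reliability $P_C$ as the probability that every phase attains its activation quorum, and then to evaluate this probability by conditioning on the exact realization of the activated set in each phase. Under the normal case operation, consensus succeeds precisely when, at the end of every phase $j$, the number of activated nodes $x_j=\left|S_{x_j}^{G_j}\right|$ is at least the required quorum $M_j$, with $M_0=n-f$ non-faulty nodes needed to start (so that the first phase can collect enough messages). Since each activated set is a subset of the finite set $\Omega$, the event of consensus success partitions into the disjoint union over all \emph{admissible} sequences $\bigl(S_{x_0}^{G_0},S_{x_1}^{G_1},\ldots,S_{x_{|G|}}^{G_{|G|}}\bigr)$, where admissibility means $x_j\ge M_j$ for every $j$ together with the nesting $S_{x_j}^{G_j}\subseteq S_{x_{j-r_j}}^{G_{j-r_j}}$ guaranteed by Lemma \ref{rv_p_1}. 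Applying the law of total probability over this partition produces exactly the nested summation in \eqref{eq_th1}, whose index ranges simultaneously encode the quorum thresholds $x_j\ge M_j$ and the subset constraints $S_{x_j}^{G_j}\subseteq S_{x_{j-r_j}}^{G_{j-r_j}}$.

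Next I would establish the factorization \eqref{rv_eq_0} of each joint term. Writing out the chain rule for the sequence of activated sets gives
\begin{equation*}
P\!\left(S_{x_0}^{G_0},\ldots,S_{x_{|G|}}^{G_{|G|}}\right)=P\!\left(S_{x_0}^{G_0}\right)\prod_{j=1}^{|G|}P\!\left(S_{x_j}^{G_j}\,\middle|\,S_{x_0}^{G_0},\ldots,S_{x_{j-1}}^{G_{j-1}}\right).
\end{equation*}
Each conditional factor then collapses onto the single phase $j-r_j$ by invoking the Markov property of Lemma \ref{rv_p_2}, which replaces the entire history by $S_{x_{j-r_j}}^{G_{j-r_j}}$. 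This step is purely formal once Lemma \ref{rv_p_2} is available, and it applies uniformly to the first-order case ($r_j=1$) and the high-order case ($r_j>1$), yielding \eqref{rv_eq_0}.

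The remaining and most delicate step is to derive the explicit product forms \eqref{rv_1A_eq_3} and \eqref{rv_1A_eq_4}. For the initial factor, the static corruption model renders the node failures mutually independent, so the probability that exactly the nodes of $S_{x_0}^{G_0}$ are non-faulty is the product of $P_u^{G_0}$ over $u\in S_{x_0}^{G_0}$ times the product of $(1-P_v^{G_0})$ over the complementary nodes, which is \eqref{rv_1A_eq_3}. For the transition factor, conditioned on $S_{x_{j-r_j}}^{G_{j-r_j}}$, I would argue that the activation events of distinct candidate nodes are mutually independent: under graph $A$ node $i$'s activation is determined by the single primary-to-$i$ link, under graph $B$ by the single $i$-to-primary link, and under graph $C$ by the incoming links terminating at $i$; in each case the links governing different candidate nodes form disjoint edge sets, so the per-node activation probabilities $P_u^{G_j}$ (as derived in Section \ref{Representation by node reliability or link reliability}) simply multiply. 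Because only nodes already in $S_{x_{j-r_j}}^{G_{j-r_j}}$ can be activated (Lemma \ref{rv_p_1}), the event that the new activated set equals exactly $S_{x_j}^{G_j}$ requires the nodes of $S_{x_j}^{G_j}$ to activate and those of $S_{x_{j-r_j}}^{G_{j-r_j}}\setminus S_{x_j}^{G_j}$ to fail to activate, giving \eqref{rv_1A_eq_4}. Substituting \eqref{rv_1A_eq_3} and \eqref{rv_1A_eq_4} into the factorization and summing over all admissible configurations completes the derivation. I expect the main obstacle to be the conditional-independence argument for graph $C$: there each node's activation depends on a count of received messages over several links, so one must verify that the incoming link sets of distinct destination nodes are genuinely disjoint, ensuring that the across-node factorization holds even though the per-node probability in \eqref{rv_1A_eq_5} is itself a sum over sender subsets.
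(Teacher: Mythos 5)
Your proposal is correct and follows essentially the same route as the paper's own proof: it partitions the consensus-success event over all admissible sequences of activated sets (encoding the quorum bounds $x_j \ge M_j$ and the nesting from Lemma \ref{rv_p_1}), applies the chain rule together with the Markov property of Lemma \ref{rv_p_2} to obtain the factorization, and then uses per-node independence to get the product forms \eqref{rv_1A_eq_3}--\eqref{rv_1A_eq_4}. Your disjoint-link-set argument for why activation events of distinct nodes are conditionally independent under graphs $A$, $B$, and $C$ is a slightly more careful justification of the independence step that the paper asserts in a single sentence, but it is the same proof, not a different approach.
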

Please see Appendix \ref{Proof: Theorem th1} for proof.

As mentioned in Section \ref{Representation by node reliability or link reliability},  
$P_i^{G_j}$ can be represented by node/link reliability. Thus in Theorem \ref{th1}, $P_C$ is a function of node/link reliability (or node/link failure rate) at different phases. Based on Theorem \ref{th1}, for consensus protocols listed in Table \ref{table: popular consensus algorithm summary} or other protocols of which the communication structure can be presented in Section \ref{Communication Structure of Representative Consensus Algorithms}, we can theoretically analyze the influence of the reliability of each node/link on the final consensus reliability. 

\begin{remark}
\textcolor{black}{Assuming different consensus instances are independent, we show that the consensus reliability of consecutive multiple instances can be extended based on that of a single instance shown in Theorem \ref{th1}. 
It can be used to evaluated how many instances will cause at least one consensus instance failure with probability 1. 
Please see Appendix \ref{consecutive instances} for more analysis.}
\end{remark}

\begin{remark}
\label{dependent_case}
\textcolor{black}{If $P_i^{G_j}$ is not independent for different node $i$ (corresponding the reliability of node/link of different nodes are not independent, e.g., correlation loss of wireless communication), Eq. (\ref{rv_1A_eq_3}) and (\ref{rv_1A_eq_4}) should be replaced as:}
\begin{equation}
\label{rv_1A_eq_1}
\begin{split}
P\left(S_{x_0}^{G_0}\right) = & \Pr ( ( \bigcap_{u \in S_{x_0}^{G_0}} u \text{ is non-faulty} ) \bigcap \\
& ( \bigcap_{v \in \complement_\Omega S_{x_0}^{G_0}} v \text{ is faulty} ) )
\end{split}
\end{equation}

\begin{equation}
\label{rv_1A_eq_2}
\begin{split}
P\left(S_{x_j}^{G_j}\middle| S_{x_{j-r_j}}^{G_{j-r_j}}\right)=& \Pr ((\bigcap_{u\in S_{x_j} ^{G_j}}{\text{ u is activated}})\bigcap\\
&{(\bigcap_{v\in\complement_{S_{x_{j-r_j}}^{G_{j-r_j}}}S_{x_j} ^{G_j}}{\text{v is not activated}}}))
\end{split}
\end{equation}
Then chain rule can be used to calculate Eq. (\ref{rv_1A_eq_1})-(\ref{rv_1A_eq_2}). Therefore, based on the conditional probabilities, the correlated probability $P_i^{G_j}$ could be fully characterized as long as there is sufficient prior information about the reliability of each node/link. 
\end{remark}

\subsection{Consensus Reliability Simplification}
\label{appro_sim}

Section \ref{Markov Property of Consensus} analyzes the theoretical consensus network reliability. However, the probabilistic computation is complicated, especially when the link/node reliability at different phases are varied. In this section, we simplify the calculations of consensus reliability, which is shown in Theorem \ref{multiAas1}.

\begin{theorem}
\label{multiAas1}
Given the communication structure $G$ 
, consider all the dependence relationships in $G$ as a graph where each phase is a node and each phase dependence relationship $r_j$ is an edge. All the dependence relationships form a tree, where the 0-th phase is the root node. Assume that there are $L$ leaf nodes and let $R_l(l=1,2,...,L)$ denote the set of phases in the path between the root node and the $l$-th leaf node, the consensus failure rate could be approximated as:
\begin{equation}
P_F\approx\sum_{l=1}^{L}P_F^{R_l}
\end{equation}
where $P_F^{R_l}$ is
\begin{equation}
P_F^{R_l}
= \sum_{\mathrm{\Omega}\supseteq S_{f+1}^{JF,R_l}}\prod_{u\in S_{f+1}^{R_l}} P_u^{ R_l}
\end{equation}
where $P_u^{JF,R_l}=1-\prod_{k \in R_l}P_u^{G_k}$ and $P_u^{G_k}$ is the probability that $u$ becoming AN in the $k$-th phase.  
The approximation will be tight if the node/ link failure rates are close to 1. 
\end{theorem}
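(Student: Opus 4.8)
The plan is to start from the exact failure rate $P_F = 1 - P_C$, with $P_C$ supplied by Theorem \ref{th1}, and reduce it to the sum of simple per-node contributions through two successive union bounds organized around the dependency tree. First I would recast consensus success in the tree structure: by Lemma \ref{rv_p_1} the activated-node sets are nested and can only shrink along any root-to-leaf path, so the global success event is the intersection, over the $L$ leaf paths $R_l$, of the event that the path $R_l$ keeps enough activated nodes at each of its phases. Because counts only decrease while the thresholds $M_j$ along a path are bounded by $n-f$, the binding requirement on a path is the $n-f$ activated-node threshold, so ``$R_l$ succeeds'' is equivalent to ``at most $f$ of the nodes fail somewhere along $R_l$.'' Taking complements, $P_F$ is the probability that at least one leaf path fails, and an outer union bound over the $L$ paths gives $P_F \le \sum_{l=1}^{L} P_F^{R_l}$, where $P_F^{R_l}$ is the failure probability of the single path $R_l$.

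The second step fixes a path $R_l$ and rewrites its failure probability via per-node \emph{joint failure}. For a node $u$, define survival along $R_l$ as being activated in every phase $k \in R_l$; using the Markov property of Lemma \ref{rv_p_2} together with the independence of link outcomes across phases, the survival probability factorizes as $\prod_{k \in R_l} P_u^{G_k}$, so the joint-failure probability is $P_u^{JF,R_l} = 1 - \prod_{k \in R_l} P_u^{G_k}$, exactly as stated. Since the binding constraint permits at most $f$ failures, path $R_l$ fails precisely when at least $f+1$ nodes suffer joint failure, and an inner union bound over all $(f+1)$-element subsets $S \subseteq \Omega$ yields $P_F^{R_l} \le \sum_{|S|=f+1}\prod_{u\in S} P_u^{JF,R_l}$, which is the claimed expression. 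Chaining the outer and inner bounds delivers $P_F \approx \sum_{l} P_F^{R_l}$.

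The final step is the error analysis, and it is also where the main obstacle lies: both union bounds discard the intersection terms, and these terms are exactly the correlations the approximation suppresses. For the inner bound the discarded events are products over strictly more than $f+1$ nodes, hence of strictly higher order in the per-node joint-failure probabilities; I would argue they are negligible in the regime where a single minimal failure configuration dominates, making the inner bound asymptotically exact. The harder correlation is in the outer bound: the leaf paths share common prefixes in the dependency tree (every path contains the root phase $0$, and in single-decree Paxos the paths $\{0,1,2\}$ and $\{0,3,4\}$ share phase $0$), so a failure localized in a shared phase is counted in several $P_F^{R_l}$, and quantifying this double counting is the crux of bounding the outer gap. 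A second source of dependence is the many-to-many component $C$, where the activation of node $u$ depends on which other nodes were previously activated; there the factorization $\prod_k P_u^{G_k}$ and the treatment of distinct nodes' failures as independent are themselves approximations rather than identities, and the key difficulty will be showing that the induced dependencies contribute only higher-order corrections so that the overall approximation remains tight in the stated limiting regime.
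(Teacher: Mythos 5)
Your proposal reproduces the right decomposition (leaf paths of the dependency tree, then a leading-order sum over $(f+1)$-subsets of per-node joint failures) and the right formula, but it establishes only one-sided bounds, and the step you yourself call ``the crux'' --- controlling the double counting across leaf paths that share prefixes --- is exactly the missing idea, not a technicality. The paper does not use an outer union bound at all. At a forked phase $a$ it conditions on the set of activated nodes $S_{x_0}^{G_0 \to G_a}$ at the fork; by the Markov property (Lemma \ref{rv_p_2}) the two branches are \emph{conditionally independent} given that set, so the conditional success probability factors exactly as $A\cdot B$, and the approximation $AB \approx A+B-1$ discards only the cross term $(1-A)(1-B)$, a product of two conditional branch-failure probabilities, hence second order in the small-failure regime. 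Moreover, carrying this through the summation over fork sets produces the correction explicitly: before the last approximation the paper has $P_F \approx P_F^{R_1}+P_F^{R_2}-P_F^{\mathrm{prefix}}$, i.e., the shared-prefix failure you worry about appears as an explicit subtracted term, which is then dropped as a separate, identifiable approximation. Your union bound $P(\bigcup_l \mathrm{fail}_l)\le \sum_l P_F^{R_l}$ is valid, but without this conditioning argument you cannot convert the inequality into the claimed two-sided approximation, and your own error analysis concedes this.

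Two further mechanisms in the paper's proof are replaced in your proposal by assertions. First, the per-node factorization $\prod_{k\in R_l}P_u^{G_k}$ and the reduction of a first-order chain to independent per-node joint reliabilities is not merely ``Markov plus independence'': it is an exact algebraic identity (Lemma \ref{rv_th_jfv}, proved by repeatedly exchanging summation order and merging adjacent phases), exact for components $A$ and $B$, and only approximate for component $C$, where the paper substitutes the averaged activation probability of Eq.~(\ref{as3})--(\ref{as4}) precisely because activation under $C$ depends on which other nodes are activated --- the dependence you flag but do not handle. Second, the inner step is not a union bound in the paper but an exact alternating (inclusion--exclusion) series, Lemma \ref{rv_th_ps}, $P_F=\sum_{t=f+1}^{n}(-1)^{t-f-1}\binom{t-1}{f}Q_t$, truncated at $t=f+1$ (Corollary \ref{oneAsimp}); this gives the same leading term as your bound but with signed, explicitly higher-order remainder terms, which is what justifies tightness of the inner approximation. (Both you and the paper implicitly need all thresholds along a path to equal $n-f$ for ``path fails iff at least $f+1$ joint failures''; also note the theorem's phrase ``failure rates close to 1'' is evidently intended as reliabilities close to 1, the regime both you and the paper actually use.)
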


Please see Appendix \ref{proof of multiAas1} for proof.

This theorem intuitively explains the principle of simplifying high-order Markov (for any $r_j$) consensus reliability: $P_F$ of high-order Markov consensus can be approximately regarded as the summation of the $P_F^{R_l}$, i.e., consensus failure rates of multiple first-order Markov (for all $r_j = 1$) communication structures. Then the simplification methods for first-order Markov property shown in Lemma \ref{rv_th_jfv} and \ref{rv_th_ps} in Appendix \ref{proof of multiAas1} can be used for each $P_F^{R_l}$.

The accuracy of the approximations in Theorem \ref{multiAas1} will be shown in the numerical results in Section \ref{Numerical Results of Consensus Failure Rate}.

\subsection{Reliability Gain and Tolerance Gain}
\label{sec: Reliability Gain and Tolerance Gain}
In this section, we propose the concepts of \textbf{Reliability Gain} and \textbf{Tolerance Gain} for different consensus protocols to obtain more concise and intuitive results to deploy wireless consensus systems in engineering. 

\subsubsection{\textcolor{black}{Assumption of Identical Failure Rates of Different Nodes}}
\textcolor{black}{We give one more assumption \cite{doi:10.1137/S009753970443999X,9184829}, which is made to simplify the analysis from the overall link reliability and overall node reliability to obtain more intuitive properties. } 

\begin{assumption}
\label{assption:iid}
\textcolor{black}{We assume the node reliability of different nodes are identical to $p_N$ and that the link reliability of different links are identical to $p_L$. }
\end{assumption}

Let $p_{G_j}$ be the activation probability of each node becoming AN in the $j$-th phase. When $j=0$, the condition for the node to become AN in phase 0 is that the node is non-faulty, thus $p_{G_0}=p_N$, where $p_N$ is the probability that one node is no-faulty. When $j=1, 2, ... |G|$, $p_{G_j}$ is determined by the communication component of this phase. According to the analysis of Section \ref{Representation by node reliability or link reliability}, if the communication graph of the $j$-th phase is A or B, $p_{G_j}=p_L$. 
Let the binomial distribution operator $B(a,b,p)=\binom{a}{b}p^b\left(1-p\right)^{a-b}$. If the communication graph of the $j$-th phase is C, 
$p_{G_j}$ is
\begin{equation}
\label{mv_D}
p_{G_j}=\sum_{k=n-f-1}^{x_{j-r_j}-1}{B(x_{j-r_j}-1, k, p_L)}
\end{equation}
and the approximated form of $p_{G_j}$ is 
\begin{equation}
\label{mv_D_approx}
\bar{p}_{G_j}\approx \sqrt[f+1]{\frac{\sum_{a=n-f}^{n}(\sum_{k=n-f-1}^{a-1}B(a-1, k, p_L))^{f+1}}{f+1}}
\end{equation}
Here Eq. (\ref{mv_D}) is the degenerated form of Eq. (\ref{rv_1A_eq_5}) \textcolor{black}{while Eq. (\ref{mv_D_approx}) is the degenerated form of Eq. (\ref{as3}) under the Assumption \ref{assption:iid}.}

Let $r_0=1$ and $x_{-1}=n$. Theorem \ref{th1} and \ref{multiAas1} can be degenerated under the Assumption \ref{assption:iid}:

\begin{corollary}
\label{cor_d_th1}
The degenerated form of Theorem \ref{th1} is:
\begin{equation}
\label{mv_cr}
p_C=\sum_{x_0= n-f}^{n}...\sum_{x_j= M_j}^{x_{j-r_j}}...\sum_{x_|G|= M_{|G|}}^{x_{|G|-r_{|G|}}}\prod_{j=0}^{|G|}{B(x_{j-r_j},x_j,p_{G_j})}
\end{equation}
\end{corollary}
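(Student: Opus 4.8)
The plan is to start directly from the full expression for $P_C$ in Theorem \ref{th1} and show that Assumption \ref{assption:iid} collapses the sums over sets into sums over cardinalities. First I would establish that, under the identical-reliability assumption, the per-node activation probability $P_i^{G_j}$ loses its dependence on the identity $i$ of the node: for phase $0$ it equals $p_N=p_{G_0}$; for a type-$A$ or type-$B$ phase every relevant link has reliability $p_L$, so $P_i^{G_j}=p_L=p_{G_j}$; and for a type-$C$ phase, Eq. (\ref{rv_1A_eq_5}) degenerates to the binomial sum $p_{G_j}=\sum_{k=n-f-1}^{x_{j-r_j}-1}B(x_{j-r_j}-1,k,p_L)$ of Eq. (\ref{mv_D}), which is common to every $i$ in the parent set $S_{x_{j-r_j}}^{G_{j-r_j}}$ (it depends on the cardinality $x_{j-r_j}$, but not on $i$). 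I would also note that the product forms (\ref{rv_1A_eq_3}) and (\ref{rv_1A_eq_4}), which already presuppose independence across nodes in Theorem \ref{th1}, carry over unchanged.

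Next, substituting these uniform values into (\ref{rv_1A_eq_3}) and (\ref{rv_1A_eq_4}) and reading off the cardinalities of the relevant sets and their complements ($|\complement_\Omega S_{x_0}^{G_0}|=n-x_0$ and $|\complement_{S_{x_{j-r_j}}^{G_{j-r_j}}}S_{x_j}^{G_j}|=x_{j-r_j}-x_j$), I would show that each factor becomes a pure power and that the joint summand (\ref{rv_eq_0}) reduces to $p_{G_0}^{x_0}(1-p_{G_0})^{n-x_0}\prod_{j=1}^{|G|}p_{G_j}^{x_j}(1-p_{G_j})^{x_{j-r_j}-x_j}$. The decisive observation is that this quantity depends on the configuration only through the cardinalities $x_0,\ldots,x_{|G|}$, not on the particular sets chosen.

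The main work, and the step I expect to be the crux, is then the combinatorial collapse of the nested sum over sets into a nested sum over cardinalities. Viewing the dependence relationships as the rooted tree described in Theorem \ref{multiAas1} (root at phase $0$, each phase $j$ attached to its unique parent $j-r_j$, with $j-r_j<j$ so that $0,1,\ldots,|G|$ is a topological order), I would process the phases in index order and count the set configurations with prescribed cardinalities: there are $\binom{n}{x_0}$ choices of $S_{x_0}^{G_0}\subseteq\Omega$, and given any fixed parent set of size $x_{j-r_j}$ there are $\binom{x_{j-r_j}}{x_j}$ choices of the subset $S_{x_j}^{G_j}$. Because sibling subsets sharing a common parent appear in separate transition factors of (\ref{rv_eq_0}) and are therefore chosen independently, the total count factorizes as $\binom{n}{x_0}\prod_{j=1}^{|G|}\binom{x_{j-r_j}}{x_j}$; the only point needing care is justifying this factorization in the presence of branching, which follows from the product structure of the joint law together with the fact that the summand is constant across all configurations of a given cardinality profile.

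Finally, I would multiply each cardinality-indexed summand by its configuration count and fold the binomial coefficients into the powers using $B(a,b,p)=\binom{a}{b}p^b(1-p)^{a-b}$, giving $\binom{n}{x_0}p_{G_0}^{x_0}(1-p_{G_0})^{n-x_0}=B(n,x_0,p_{G_0})$ and $\binom{x_{j-r_j}}{x_j}p_{G_j}^{x_j}(1-p_{G_j})^{x_{j-r_j}-x_j}=B(x_{j-r_j},x_j,p_{G_j})$. Adopting the stated conventions $r_0=1$ and $x_{-1}=n$ makes $B(x_{-1},x_0,p_{G_0})=B(n,x_0,p_{G_0})$, which absorbs the $j=0$ term into the product and yields exactly $\prod_{j=0}^{|G|}B(x_{j-r_j},x_j,p_{G_j})$, with the summation ranges inherited unchanged from Theorem \ref{th1}. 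This establishes Eq. (\ref{mv_cr}).
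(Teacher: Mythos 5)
Your proof is correct and takes essentially the same route the paper intends: the paper states this corollary without a separate proof, treating it as the immediate degeneration of Theorem \ref{th1} under Assumption \ref{assption:iid}, which is exactly what you flesh out (node-independent activation probabilities $p_{G_j}$, a summand that depends only on the cardinality profile, and the configuration count $\binom{n}{x_0}\prod_{j=1}^{|G|}\binom{x_{j-r_j}}{x_j}$ folding into the factors $B(x_{j-r_j},x_j,p_{G_j})$ with the conventions $r_0=1$, $x_{-1}=n$). Your handling of the type-C phases (where $p_{G_j}$ depends on the parent cardinality $x_{j-r_j}$ but not on the node identity or the particular parent set) and of sibling phases sharing a parent correctly fills in the only points the paper leaves implicit.
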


\begin{corollary}
\label{degen_simp}
Given the communication structure $G$ where $M_j=n-f$ for any $j=1,2,...|G|$, the degenerated form of Theorem \ref{multiAas1} is:
\begin{equation}
p_F\approx \binom{n}{f+1}p_{JF,R}^{f+1}
\end{equation}
where $p_{JF,R}=\sqrt[f+1]{\sum_{l=1}^{L}p_{JF,R_l}^{f+1}}$ and $p_{JF,R_l}=1-\prod_{k \in R_l}p_{G_k}$. We define $p_{JF,R}$ as the overall joint failure rate.
The approximation will be tight if the node/ link failure rates are close to 1.
\end{corollary}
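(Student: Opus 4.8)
The plan is to treat this corollary as a direct specialization of Theorem \ref{multiAas1} under Assumption \ref{assption:iid}, so that essentially all of the probabilistic content is inherited and the remaining work reduces to a counting collapse. I would start from the conclusion of Theorem \ref{multiAas1}, namely $P_F\approx\sum_{l=1}^{L}P_F^{R_l}$ with $P_F^{R_l}=\sum_{\Omega\supseteq S_{f+1}^{JF,R_l}}\prod_{u\in S_{f+1}^{JF,R_l}}P_u^{JF,R_l}$ and $P_u^{JF,R_l}=1-\prod_{k\in R_l}P_u^{G_k}$, and then impose the identical-reliability hypothesis.

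The first substantive step is to observe that under Assumption \ref{assption:iid} every per-node, per-phase activation probability $P_u^{G_k}$ collapses to the common value $p_{G_k}$ (equal to $p_N$ at phase $0$, $p_L$ for graphs $A$ and $B$, and the value in Eq. (\ref{mv_D})/approximation (\ref{mv_D_approx}) for graph $C$), independently of the node index $u$. Consequently the joint-failure probability along a path becomes node-independent, $P_u^{JF,R_l}=1-\prod_{k\in R_l}p_{G_k}=p_{JF,R_l}$. Feeding this into $P_F^{R_l}$, each summand $\prod_{u\in S_{f+1}^{JF,R_l}}p_{JF,R_l}$ equals $p_{JF,R_l}^{\,f+1}$, and the number of size-$(f+1)$ subsets of $\Omega$ is $\binom{n}{f+1}$, so $P_F^{R_l}=\binom{n}{f+1}p_{JF,R_l}^{\,f+1}$.

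The final step is to sum over the $L$ leaf paths and repackage. Since $\binom{n}{f+1}$ is common to every path, $P_F\approx\binom{n}{f+1}\sum_{l=1}^{L}p_{JF,R_l}^{\,f+1}$, and substituting the definition $p_{JF,R}=\sqrt[f+1]{\sum_{l=1}^{L}p_{JF,R_l}^{\,f+1}}$ yields $p_F\approx\binom{n}{f+1}p_{JF,R}^{\,f+1}$ exactly. The tightness statement (node/link failure rates close to $1$) is inherited verbatim from Theorem \ref{multiAas1}, so nothing new needs to be argued there.

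The place that requires genuine care, rather than mechanical substitution, is justifying why the critical failure set has cardinality exactly $f+1$ uniformly across all paths — this is precisely where the hypothesis $M_j=n-f$ for all $j$ enters. I would argue that requiring at least $n-f$ activated nodes at every phase forces a consensus failure along a path to occur exactly when the number of joint-failed nodes reaches $n-(n-f)+1=f+1$, so the critical subset size is the same $f+1$ for every leaf path and the single binomial factor $\binom{n}{f+1}$ can be pulled out uniformly; without the uniform threshold the factor would be path-dependent and this clean form would break. A secondary subtlety I would flag is that for graph-$C$ phases $P_u^{G_k}$ genuinely depends on the number of nodes activated in the preceding phase, so strictly the collapse to a constant $p_{G_k}$ relies on the averaged approximation $\bar{p}_{G_j}$ of Eq. (\ref{mv_D_approx}); I would note that this is the same approximation already folded into Theorem \ref{multiAas1}, and hence it introduces no error beyond the one already accounted for.
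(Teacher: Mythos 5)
Your proof is correct and takes essentially the same approach the paper intends: Corollary \ref{degen_simp} is stated there without a separate proof precisely because it is the direct specialization of Theorem \ref{multiAas1} under Assumption \ref{assption:iid} that you carry out — collapsing $P_u^{G_k}$ to the node-independent $p_{G_k}$, hence $P_u^{JF,R_l}$ to $p_{JF,R_l}$, replacing the sum over size-$(f+1)$ subsets by the factor $\binom{n}{f+1}$, and summing over the $L$ leaf paths to repackage into $p_{JF,R}$. The two subtleties you flag (the hypothesis $M_j=n-f$ fixing the critical failure-set size at $f+1$ uniformly across paths, and the graph-C collapse relying on the averaged approximation $\bar{p}_{G_j}$ of Eq. (\ref{mv_D_approx})) are exactly the points already absorbed into the proof of Theorem \ref{multiAas1} via Lemma \ref{rv_th_jfv} and Corollary \ref{oneAsimp}, so no additional argument is needed.
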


\subsubsection{Reliability Gain}

According to Corollary \ref{degen_simp}, it can be seen that the consensus failure rate and the overall joint failure rate are approximately linear in logarithmic form. \textcolor{black}{We define the \textbf{Reliability Gain} as the decrement of the logarithmic consensus failure rate if the overall joint failure rate is reduced by one order of magnitude. We show that the Reliability Gain is equal to $f+1$ in the following theorem:}
\begin{theorem}
	\label{the:RG}
	When the overll joint failure rate $p_{JF,R}$ is reasonably small\footnote{Our results show that $p_{JF, R}=0.1$ is small enough to make the conclusion, while this condition is achieved in most of the application scenarios.}, the consensus failure rate $p_F$ has a linear relation of ${\rm log}p_{JF,R}$ in logarithmic form, 
	\begin{equation}
		\label{equ:linear_p_uni}
		{\rm log}p_F=k_p\cdot {\rm log}p_{JF,R}+h_p 
	\end{equation}
	where the Reliability Gain $k_p=f+1$, the intercept $h_p={\rm log}(\binom{n}{f+1})$.
\end{theorem}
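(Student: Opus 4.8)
The plan is to derive Theorem \ref{the:RG} directly from the closed-form estimate of Corollary \ref{degen_simp}, which under Assumption \ref{assption:iid} and the structural condition $M_j=n-f$ already supplies $p_F\approx\binom{n}{f+1}p_{JF,R}^{f+1}$. Once this single-term expression is in hand, the claimed log-linear law together with the exact slope and intercept follow by elementary manipulation; the only content beyond algebra is pinning down why the ``reasonably small $p_{JF,R}$'' hypothesis is exactly what validates the approximation, and hence the linearity.

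Concretely, I would apply ${\rm log}$ to both sides of the estimate in Corollary \ref{degen_simp} and use the product and power rules for logarithms:
\begin{equation}
{\rm log}\,p_F\approx{\rm log}\!\left(\binom{n}{f+1}p_{JF,R}^{f+1}\right)=(f+1)\,{\rm log}\,p_{JF,R}+{\rm log}\binom{n}{f+1}.
\end{equation}
Comparing with the target affine form ${\rm log}\,p_F=k_p\cdot{\rm log}\,p_{JF,R}+h_p$ reads off the slope $k_p=f+1$ and the intercept $h_p={\rm log}\binom{n}{f+1}$ at once. Because $k_p$ does not depend on $p_{JF,R}$, the dependence of ${\rm log}\,p_F$ on ${\rm log}\,p_{JF,R}$ is genuinely affine with constant slope, which is the assertion of the theorem.

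The step requiring care, and the main obstacle, is justifying that the single leading term of Corollary \ref{degen_simp} faithfully represents $p_F$ in the small-$p_{JF,R}$ regime, since otherwise the neglected sub-leading contributions would bend the log-log curve away from a straight line. To control this I would compare $p_F$ against its natural tail expansion in powers of $p_{JF,R}$, whose leading term is $\binom{n}{f+1}p_{JF,R}^{f+1}$ and whose next term scales like $\binom{n}{f+2}p_{JF,R}^{f+2}$. The ratio of these two terms is $\tfrac{n-f-1}{f+2}\,p_{JF,R}$, and the survival factor $(1-p_{JF,R})^{\,n-f-1}$ that is absorbed into the leading term contributes a correction of order $(n-f-1)\,p_{JF,R}$; both are $O(n\,p_{JF,R})$. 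Hence for $p_{JF,R}$ small enough (the footnote fixes $p_{JF,R}=0.1$ as a practical threshold) the relative error is a lower-order additive term in ${\rm log}\,p_F$ that perturbs only the intercept at higher order and leaves the slope $k_p=f+1$ intact, completing the argument.
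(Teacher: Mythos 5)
Your proposal is correct and takes essentially the same route as the paper: the paper gives no separate proof of Theorem \ref{the:RG}, presenting it as an immediate consequence of taking logarithms in Corollary \ref{degen_simp} (``According to Corollary \ref{degen_simp}, it can be seen that\ldots''), which is precisely your first step, with slope $f+1$ and intercept ${\rm log}\binom{n}{f+1}$ read off directly. Your added tail-expansion error control is extra rigor the paper omits here (it invokes the analogous factor $(1-p_{JF,R})^{n-f-1}$ only later, in the proof of Theorem \ref{the:RA}), though note the next-order coefficient in Lemma \ref{rv_th_ps} is $a_{f+2}=-(f+1)$, so the term ratio carries an additional factor of $f+1$; this does not affect your conclusion since it remains $O(n\,p_{JF,R})$.
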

The overll joint failure rate $p_{JF,R}$ is expressed by the link/node relaibilty, and Table. \ref{table: equations summary} shows $p_{JF,R}$ for different protocols. Theorem \ref{the:RG} shows that for a fixed number of nodes in a consensus network, the order of magnitude of consensus failure rate ${\rm log}p_F$ decreases linearly with the coefficient $f+1$ by reducing ${\rm log}p_{JF,R}$, which can be obtained by increasing the communication quality or using facilities with higher reliability. Thus the proposed Reliability Gain can be an intuitive indicator for evaluating the consensus reliability.

\subsubsection{Tolerance Gain}

Furthermore, \textcolor{black}{we define the \textbf{Tolerance Gain} as the decrement of the logarithmic consensus failure rate if the maximum number of tolerant faulty nodes $f$ increases 1.
The following theorem gives a closed-form of the \textbf{Tolerance Gain}.}	

\begin{theorem}
	\label{the:RA}
	When the overall joint failure rate $p_{JF,R}$ is reasonably small\footnote{Our results show that $p_{JF,R}=0.1$ is small enough to make the conclusion, while this condition is achieved in most of the application environments.}, the logarithmic consensus failure rate ${\rm log}p_F$ has a linear relation with the faulty node threshold $f$,

\begin{equation}
\small
\begin{aligned}
&\log p_F = \\ &
\left\{
\begin{array}{ll}
k_{cf} \cdot f + h_{cf} & \text{(for CFT when $n=2f$)} \\
k_{cf} \cdot f + h_{cf} + \log(2p_{J,R}) & \text{(for CFT when $n=2f+1$)} \\
k_{bf} \cdot f + h_{bf} & \text{(for BFT when $n=3f$)} \\
k_{bf} \cdot f + h_{bf} + \log(3p_{J,R}/2) & \text{(for BFT when $n=3f+1$)} \\
k_{bf} \cdot f + h_{bf} + 2\log(3p_{J,R}/2) & \text{(for BFT when $n=3f+2$)}
\end{array}
\right.
\end{aligned}
\end{equation}

	Here $p_{J,R}=1-p_{JF,R}$. For CFT the Tolerance Gain $k_{cf}=({\rm log}p_{JF,R}+{\rm log}p_{J,R}+2{\rm log}2)$ and the intercept $h_{cf}={\rm log}(\frac{p_{JF,R}}{p_{J,R}\sqrt{\pi}})+\Delta f $, in which $\Delta f=-\frac{1}{2}{\rm log}(f)$ is the non-linear complementary term to decrease the approximation error. For BFT, the Tolerance Gain $k_{bf}={\rm log}p_{JF,R}+2{\rm log}p_{J,R}+3{\rm log}3-2{\rm log}2$ and the intercept $h_{bf}={\rm log}(\frac{\sqrt{3}p_{JF,R}}{\sqrt{\pi}p_{J,R}})+\Delta f $.
\end{theorem}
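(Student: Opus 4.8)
The plan is to start from the degenerated failure-rate estimate and refine it so that the full dependence on $f$ is explicit, then push everything through Stirling's approximation. Under Assumption \ref{assption:iid}, Corollary \ref{degen_simp} gives the coarse bound $p_F \approx \binom{n}{f+1} p_{JF,R}^{f+1}$, which keeps only the leading ``exactly $f+1$ joint failures'' term and discards the survival factor $(1-p_{JF,R})^{n-f-1}$. Because the Tolerance Gain varies $f$ (and hence $n$), that survival factor can no longer be treated as a constant, so I would instead work with the sharper leading term
\begin{equation}
p_F \approx \binom{n}{f+1}\, p_{JF,R}^{f+1}\, p_{J,R}^{\,n-f-1},
\end{equation}
where $p_{J,R}=1-p_{JF,R}$. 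Taking logarithms yields
\begin{equation}
\log p_F \approx \log\binom{n}{f+1} + (f+1)\log p_{JF,R} + (n-f-1)\log p_{J,R},
\end{equation}
which already exposes the two linear-in-$f$ contributions $(f+1)\log p_{JF,R}$ and $(n-f-1)\log p_{J,R}$; the remaining task is to linearize $\log\binom{n}{f+1}$.

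Next I would substitute the protocol-specific relation between $n$ and $f$. For CFT, $f=\lfloor n/2\rfloor$ produces the subcases $n=2f$ and $n=2f+1$; for BFT, $f=\lfloor n/3\rfloor$ produces $n=3f,\,3f+1,\,3f+2$. In each base case ($n=2f$ and $n=3f$) I would apply Stirling's formula $\log m! = m\log m - m + \tfrac12\log(2\pi m)+o(1)$ to $\log\binom{n}{f+1}$. The crucial feature is that the $m\log m$ and $-m$ contributions cancel exactly, leaving a purely linear coefficient (namely $2f\log 2$ for CFT and $f(3\log 3-2\log 2)$ for BFT) together with an $O(1)$ constant and a residual $-\tfrac12\log f$ term. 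Collecting the coefficient of $f$ then gives $k_{cf}=2\log2+\log p_{JF,R}+\log p_{J,R}$ and $k_{bf}=3\log3-2\log2+\log p_{JF,R}+2\log p_{J,R}$, while the leftover constants, combined with the $\log p_{JF,R}-\log p_{J,R}$ coming from the ``$+1$'' and ``$-1$'' in the two linear exponents, assemble into $h_{cf}$ and $h_{bf}$, with $\Delta f=-\tfrac12\log f$ absorbing the slowly varying $-\tfrac12\log f$ correction into the intercept.

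For the remaining offset cases I would proceed incrementally rather than redoing Stirling from scratch. Writing each target case as a ratio against its base case, the binomial ratios $\binom{2f+1}{f+1}/\binom{2f}{f+1}\to 2$, $\binom{3f+1}{f+1}/\binom{3f}{f+1}\to \tfrac32$, and $\binom{3f+2}{f+1}/\binom{3f+1}{f+1}\to\tfrac32$, together with the single extra factor of $p_{J,R}$ that each increment of $n$ contributes through $p_{J,R}^{\,n-f-1}$, produce exactly the additive terms $\log(2p_{J,R})$, $\log(3p_{J,R}/2)$, and $2\log(3p_{J,R}/2)$ appearing in the statement.

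I expect the main obstacle to be the Stirling bookkeeping around the lower index $f+1$ rather than the central value. The slope in $f$ is robust, but the $O(1)$ constant in $h_{cf},h_{bf}$ is sensitive to the discrepancy between $\binom{n}{f+1}$ and the central coefficient: one finds $\binom{2f}{f+1}\approx\binom{2f}{f}$ (no shift) but $\binom{3f}{f+1}\approx 2\binom{3f}{f}$ (a $\log2$ shift), and it is precisely this $\log 2$ that cancels the $-\log 2$ emerging from the $\tfrac12\log(2\pi m)$ Stirling terms in the BFT case, so that $h_{bf}$ collapses to $\log\big(\sqrt3\,p_{JF,R}/(\sqrt\pi\,p_{J,R})\big)+\Delta f$ with no stray $-\log2$. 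Matching these subleading constants, and delimiting the ``reasonably small $p_{JF,R}$'' regime in which the discarded higher-order failure terms ($m\ge f+2$) are negligible, is where the care lies; the rest is routine algebra.
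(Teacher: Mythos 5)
Your proposal is correct and takes essentially the same route as the paper's proof: both start from the leading term $\binom{n}{f+1}p_{JF,R}^{f+1}$ of Corollary \ref{degen_simp}, reinsert the survival factor $(1-p_{JF,R})^{n-f-1}$ (valid since $p_{JF,R}$ is small), apply Stirling's formula to the binomial coefficient in the base cases $n=2f$ and $n=3f$, and split the result into the linear-in-$f$ slope $k_{cf}$ or $k_{bf}$ plus the intercept with the $-\tfrac{1}{2}\log f$ correction $\Delta f$. The one place you improve on the paper's writeup is the offset cases ($n=2f+1$, $3f+1$, $3f+2$): the paper merely asserts they follow "similarly" from Stirling, whereas your binomial-ratio argument ($\binom{2f+1}{f+1}/\binom{2f}{f+1}\to 2$, etc., each paired with one extra factor of $p_{J,R}$) derives the additive constants $\log(2p_{J,R})$, $\log(3p_{J,R}/2)$, and $2\log(3p_{J,R}/2)$ cleanly and without repeating the asymptotic expansion; your bookkeeping of the $\log 2$ cancellation in $h_{bf}$ is also more careful than the paper's intermediate step, which drops a $\sqrt{\pi}$ that only reappears in its final expression.
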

The proof of Theorem \ref{the:RA} is shown in Appendix \ref{proof of the:RA}.
  
Tolerance Gain shows that the probability of reaching consensus can be altered by not only changing $p_{JF,R}$ but also the size of the network to modify $f$. Fault tolerance allowing at most $f$ nodes failed infers one feature of the distributed consensus system is resilience. It is shown in Theorem \ref{the:RA} that $  {\rm log}p_F$ will decrease linearly with the increment of $f$. This intuitively reflects the impact of the resilience of the system on consensus reliability.

With the help of Reliability Gain and Tolerance Gain, we can quickly calculate the required link loss rate, node failure rate, or the number of nodes in the system to satisfy any stringent reliability requirement. Therefore, it is suggested that the concepts of Reliability Gain and Tolerance Gain can be the design guideline towards real-world consensus deployment with possible imperfect communication reliability

\begin{table*}[htbp]
\renewcommand\arraystretch{1.6}
\caption{\textcolor{black}{Summary of $P_{JF,R}$ for representative consensus algorithms.}}
\begin{tabular}{|c|c|c|c|c|}
\hline
Protocol & Raft \cite{ongaro2014search} & Paxos \cite{lamport2001paxos} & PBFT \cite{castro1999practical} & Hotstuff \cite{yin2019hotstuff}\\
\hline
$P_{JF,R}$ & $1-p_Np_L^2$ & $\sqrt[f+1]{2}(1-p_Np_L^2)$ & \makecell[c]{$1-p_Np_L\bar{p_*}$, \\ where $\bar{p_*}$ satisfies Eq. (\ref{mv_D_approx})} & $\sqrt[f+1]{(1-p_Np_L^2)^{f+1}+(1-p_Np_L^3)^{f+1}+2(1-p_Np_L^4)^{f+1}}$\\
\hline
\end{tabular}
\label{table: equations summary}
\end{table*}

\section{Latency and Consensus Reliability}
\label{sec: consensus reliability and latency}
Consensus latency represents the duration spanning from the initial arrival of instances at the leader to their subsequent commitment by the leader.
In this section, we will show how the Consensus Reliability proposed in Section \ref{rv_derivation} affects latency and how to mitigate the latency degradation in Figure \ref{fig: intro_fig} by decreasing the consensus failure rate.

\subsection{Higher Consensus Failure Rate Increases Latency}
\label{Relationship Between Latency and Consensus Reliability}
Latency is composed of processing latency, propagation latency, transmission latency, and queuing latency. We will show a higher consensus failure rate will greatly increase the transmission latency and queuing latency.

Let $L_{tr}$ denote the random variable of transmission latency, which refers to the time for transmitting messages within a network. Let $L_q$ denote the random variable of queuing latency, referring to the time the messages are waiting to be processed. \textcolor{black}{Assume that if a consensus instance fails due to an excessive number of link losses or node failures, the system will reattempt to achieve consensus for that specific instance until it succeeds. Let the latency of a single consensus attempt be $L_C$, which is determined by the consensus communication structure. For example, in the RAFT consensus, 
$L_C$ is approximate to the latency of two links, where one link is from the leader to followers and another is from followers to the leader.}

Since a failed instance will reattempt consensus until it succeeds, we could obtain that the transmission latency $L_{tr}$ is a random variable that follows the distribution as:
\begin{equation}
\label{L_distri}
    Pr(L_{tr}=KL_C)=P_F^{K-1}(1-P_F)
\end{equation}
where $K$ denotes the number of consensus attempts and $P_F$ is the consensus failure rate. Thus the expectation of transmission latency could be obtained as 
\begin{equation}
    E(L_{tr})=\sum_{K=1}^{\infty}KL_CPr(L_{tr}=KL_C)=\frac{L_C}{1-P_F}
\end{equation}
This directly shows a larger consensus failure rate leads a to higher transmission latency.

The queuing model usually relies on the specific protocol design and system implementation. In this case, we take RAFT as an example to analyze queuing latency. RAFT maintains a continuous log sequence with no gaps, which means the leader is only allowed to append new instances after existing ones, while moving or deleting previous instances is not allowed. This characteristic implies that RAFT requires instances with smaller indices to be committed before instances with larger indices. Consequently, RAFT's queuing latency arises from waiting for consensus reattempts of earlier instances. Assuming the arrival instances follow the Poisson flow, (the time interval between instance arrivals, $T_{arrive}$, follows exponential distribution) \cite{10.1214/aoms/1177728975}, the queuing latency of RAFT adheres to the $\bar{M}$/$\bar{G}$/1/FCFS queuing model \cite{10.1214/aoms/1177728975}, where $\bar{M}$ denotes a Poisson flow of arrival instances, $\bar{G}$ denotes a general service time distribution, 1 denotes a single server, and FCFS stands for first-come-first-served. One crucial parameter in the queuing model, the service time, can be determined as $T_{serve} = L_{tr} - L_C$ (consensus instances without reattempt are parallel and do not cause queuing delay).
According to Pollaczek-Khintchine Theorem \cite{khinchin1967mathematical}, the expectation of queuing latency is
\begin{equation}
\label{E(L_q)}
E(L_q)=\frac{Var(T_{serve})+E^2(T_{serve})}{2(E(T_{arrive})-E(T_{serve}))}
\end{equation}
\textcolor{black}{where $E(T_{serve})=\frac{P_F}{1-P_F}L_C$ and $Var(T_{serve})=\frac{P_F}{(1-P_F)^2}L_C^2$.}It could be directly obtained that a larger consensus failure rate leads to higher $E(T_{serve})$ and $Var(T_{serve})$ thus increase queuing latency.

\subsection{Optimization of Latency and Consensus Reliability}
\label{Optimization of Latency and Consensus Reliability}
In this section, we proposed two methods that utilize the theoretical analysis of Theorem \ref{th1} and Theorem \ref{the:RA} to effectively mitigate the latency degradation caused by link loss. The principle is to reduce the consensus failure rate thus decreasing  the system latency. 

\textbf{Method 1: Utilizing Tolerance Gain.}
\textcolor{black}{According to Theorem \ref{the:RA}, increasing the number of backup nodes, $n$, can effectively increase the maximum number of faulty nodes the system can tolerate, $f$, thus reduce the consensus failure rate $p_F$ (in Theorem \ref{the:RA} the coefficient $k_c$ and $k_b$ are less than 0). Then the transmission latency and queuing latency have a smaller probability distribution over larger values. Thus Method 1 is to appropriately increase the number of nodes.
Note that this method assumes sufficient communication resources
so that increasing participated nodes will not increase the latency of one single link communication.} Thus this method requires increasing communication resources
(more links) and computational resources (more nodes).

\textbf{Method 2: Utilizing Effective Communication Resource Allocation.}
Compared to Method 1, Method 2 optimizes latency by effectively allocating communication resources (such as signal transmission power) while keeping the number of nodes constant. Specifically, according to Theorem \ref{th1}, consensus reliability can be viewed as a function of the loss rate of all links. Based on the classical Signal-to-Noise-Ratio (SNR) model \cite{1374908}, we could model each wireless link loss through the outage probability of the Rayleigh channel (applicable to urban environments) as $P^{LF}=1-e^{-\frac{\gamma_{th}}{Gain*P_{tr}/P_{noise}}}$\cite{1374908}, where $P^{LF}$ is the link loss rate, $\gamma_{th}$ is the required target SNR, $P_{tr}$ is the signal transmission power, and $P_{noise}$ is the noise power. $Gain$ is the quality of the wireless communication environment and contains such parameters as the antenna gains, communication distance, and shadowing. Friis Propagation Formula could be used as an example to model $Gain$ \cite{2009wireless,1374908}. The principle of the optimization is to effectively allocate the total transmission power for different channels based on their $P_{noise}$ and $Gain$, so that the cluster has higher link success rates thus lower consensus failure rate and latency.

We will use RAFT as an example to show this method. Based on Theorem \ref{th1} and \ref{multiAas1}, the consensus failure rate of RAFT can be expressed as 
\begin{equation}
    	P^{RAFT}_F=\sum\limits_{t=f+1}^{n}(-1)^{t-f-1}\binom{t-1}{f}\sum_{ S^{JF}\subseteq\mathrm{\Omega}, |S^{JF}|=t}{\prod_{i\in S^{JF}} P_i^{JF}}
\end{equation}
where $P_i^{JF}=1-(1-P_i^{LF})^2$ and $P_i^{LF}=1-e^{-\frac{\gamma_{th}}{Gain_i*P_{tr,i}/P_{noise,i}}}$ is the link loss rate between the $i$-th node and leader. We could minimize the consensus failure rate to reduce the latency for given $\gamma_{th}$, $P_{noise,i}$, $Gain_i$, and the total transmitted power $P_{tr, total}$. Thus we have the following optimization problem:
\begin{align*}
\label{optim_raft}
&\min\quad P^{RAFT}_F\\
& \begin{array}{r@{\quad}r@{}l@{\quad}l}
s.t.&2\sum\limits_{i=1}^n P_{tr,i}&=P_{tr, total},  &i=1,2,3\ldots,n\\
 &P_{tr,i}&\geq0,  &i=1,2,3\ldots,n  \\
\end{array} .
\end{align*}
\textcolor{black}{We can solve this optimization problem by using Sequential Quadratic Programming \cite{doi:https://doi.org/10.1002/9781118723203.ch12}, a typical optimization algorithm for solving nonlinear optimization problems with constraints, to obtain the allocated power and each link loss rate.} Thus we obtain the optimized consensus failure rate and latency.

\section{Experiments}
\label{sec: experiments}

\subsection{\textcolor{black}{Results of Consensus Failure Rate and Latency}}
\label{sec: case study}

In this section, we implemented a RAFT consensus system to empirically validate our theoretical analysis concerning consensus failure rate and latency. Additionally, we assess the efficacy of the proposed latency optimization Method 1 and Method 2. Note that as highlighted in Section \ref{Optimization of Latency and Consensus Reliability}, these latency optimization techniques are not exclusively applicable to RAFT but also compatible to a broader range of consensus protocols.

We evaluated the system latency and consensus failure rate under varying conditions, including different link loss rates, number of nodes, and power allocations. Latency was measured from the instance's arrival at the leader until its commitment by the leader.
We set the arriving instances as the Poisson flow and the arrival rate as 60 per second. The single link latency is set as 0.1 s. According to Section \ref{Relationship Between Latency and Consensus Reliability}, the latency of one consensus attempt $L_C$ is approximately 0.2 s. The consensus reattempt timeout $L_{timeout}$ is set as 0.3 s. In other words, the leader will wait for 0.3 seconds to receive followers' responses and will reattempt consensus if a timeout occurs. We also set the node failure rate as 0 in this case study to show the effects of link loss only.

\begin{figure}[htbp]
  \centering 
  \includegraphics[width=0.5\textwidth, height = 5cm]{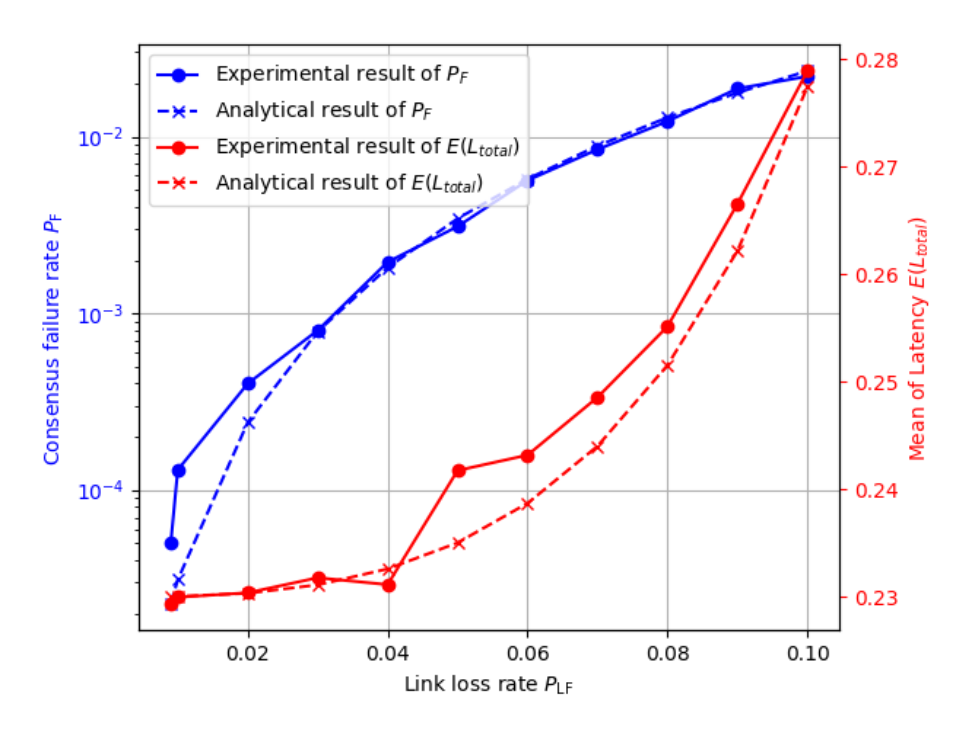}
  \caption{Analytical result and the experimental result of consensus failure rate and latency under different link loss rates when $n=4$.}
  \label{latency_failure} 
\end{figure}

\begin{figure*}
  \centering
  \includegraphics[width=1\textwidth, height = 4cm]{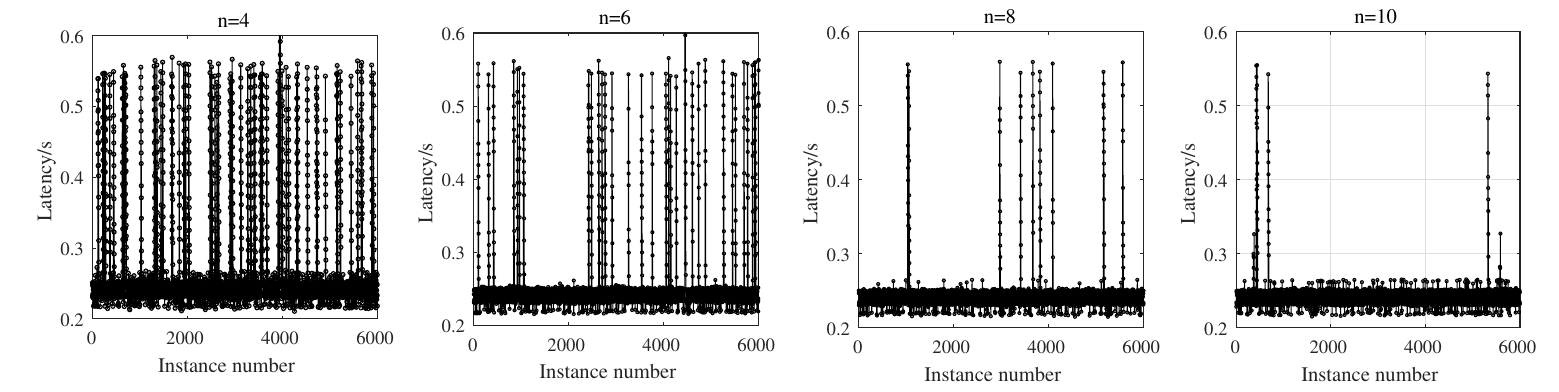}
  \caption{Latency in the RAFT system when link loss rate is 0.07 and n is 4, 6, 8, and 10.}
  \label{Appendix_fig_1} 
\end{figure*}

\begin{figure}[htbp]
  \centering 
  \includegraphics[width=7cm, height = 4cm]{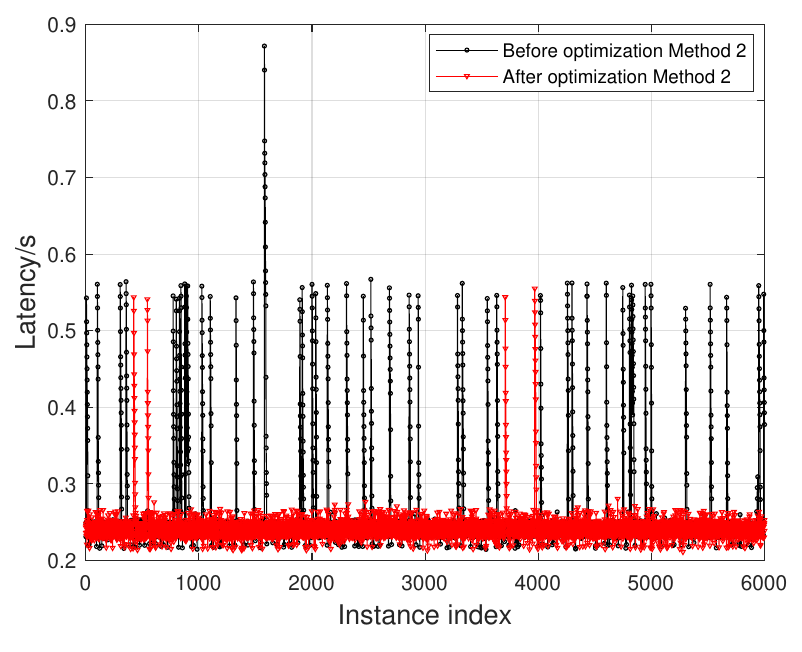}
  \caption{The comparison of latency in RAFT system before optimization Method 2 and after optimization Method 2.} 
  \label{optimization_method_2} 
\end{figure}

\textbf{Test for consensus reliability and latency:}
\label{Test for consensus reliability and latency:}

Figure \ref{latency_failure} shows the analytical result and the experimental result of consensus failure rate and latency under different link loss rates. As shown in Figure \ref{latency_failure}, the experimental result is generally consistent with the analytical result of latency and consensus failure rate.

\textbf{Test of Method 1:}  Figure \ref{Appendix_fig_1} shows the latency optimization of Method 1. This indicates that based on Tolerance Gain, increasing the number of nodes from $n=4$ to $n=10$ could decrease the order of consensus failure rate, thus improving the latency performance. The result shows that the latency degradation is mitigated and the consensus failure rate is generally consistent with the theoretical results of Tolerance Gain (i.e., less consensus failure rate for more nodes). We should mention that Method 1 requires increasing communication resources and computational resources since it needs more communication links and nodes.

\begin{figure}
\centering 
	\includegraphics[width=6cm,height = 4.5cm]{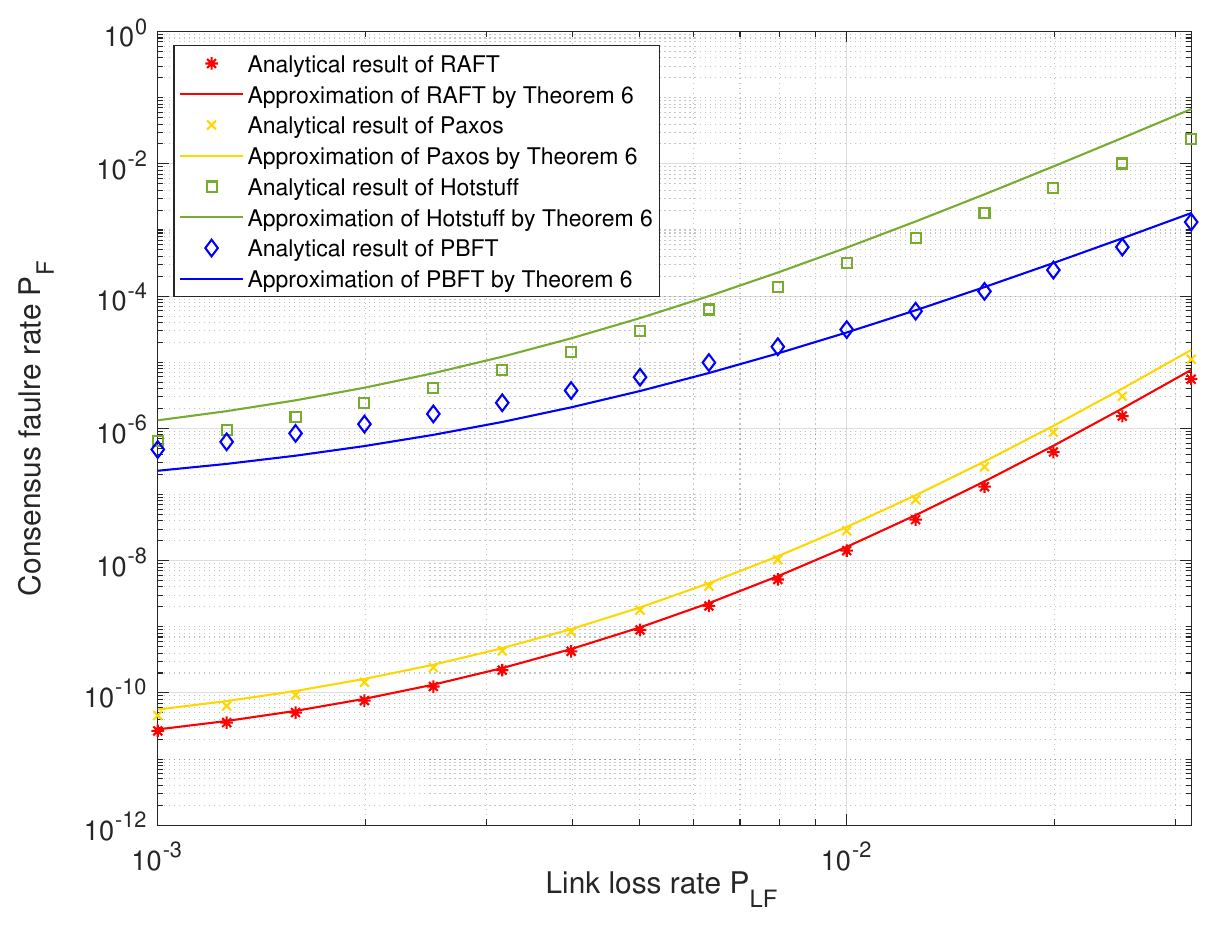}
	\caption{ Comparison between the approximate and non-approximate results of consensus failure rate $P_{F}$ for four representative consensus protocols when $n=12$ and $p_{NF}=0.01$. }
	\label{fig_approx}
\end{figure}

\begin{figure}
\centering 
	\includegraphics[width=6cm,height = 4.5cm]{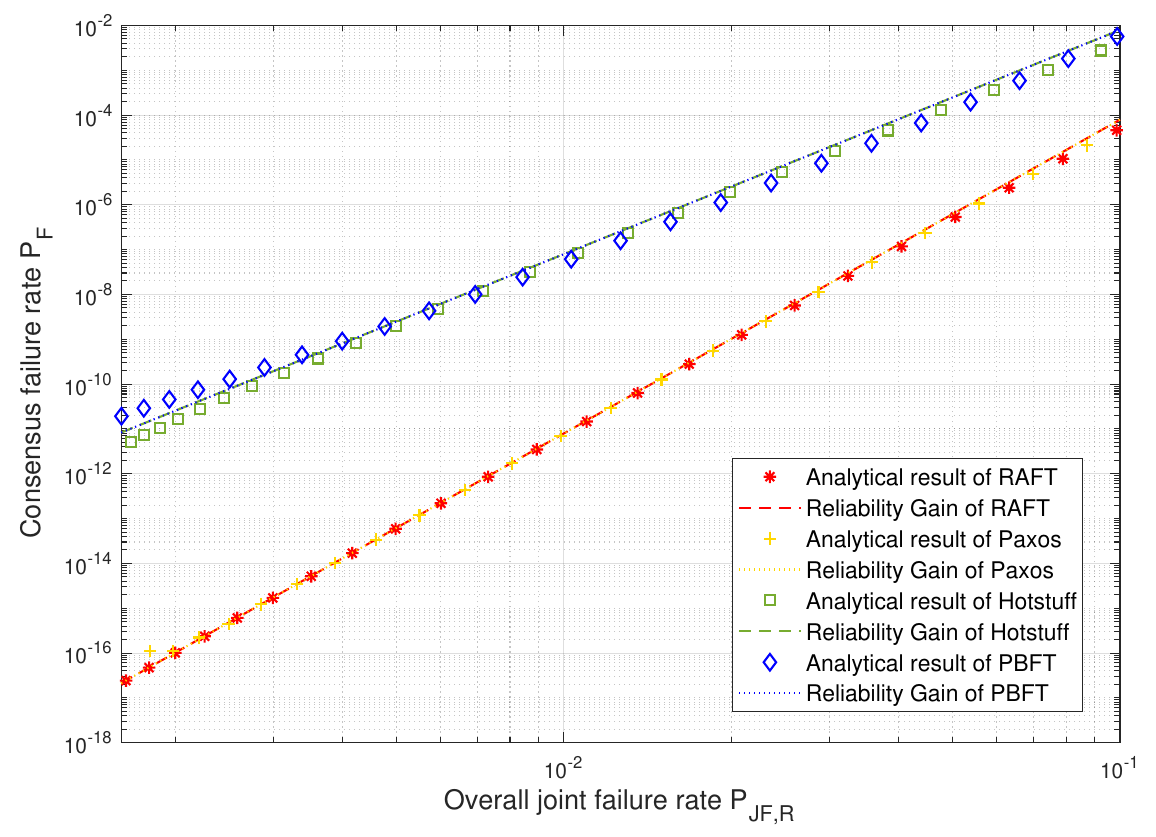}
	\caption{ Reliability Gain of consensus failure rate for four representative consensus protocols when $n=12$, $p_{NF}=0.001$, and $p_{LF}$ varies. $P_{JF,R}$ could be obtained according to Table \ref{table: equations summary}.}
	\label{fig_RG}
\end{figure}

\begin{figure}
	\centering
	\includegraphics[width=6cm,height = 4.5cm]{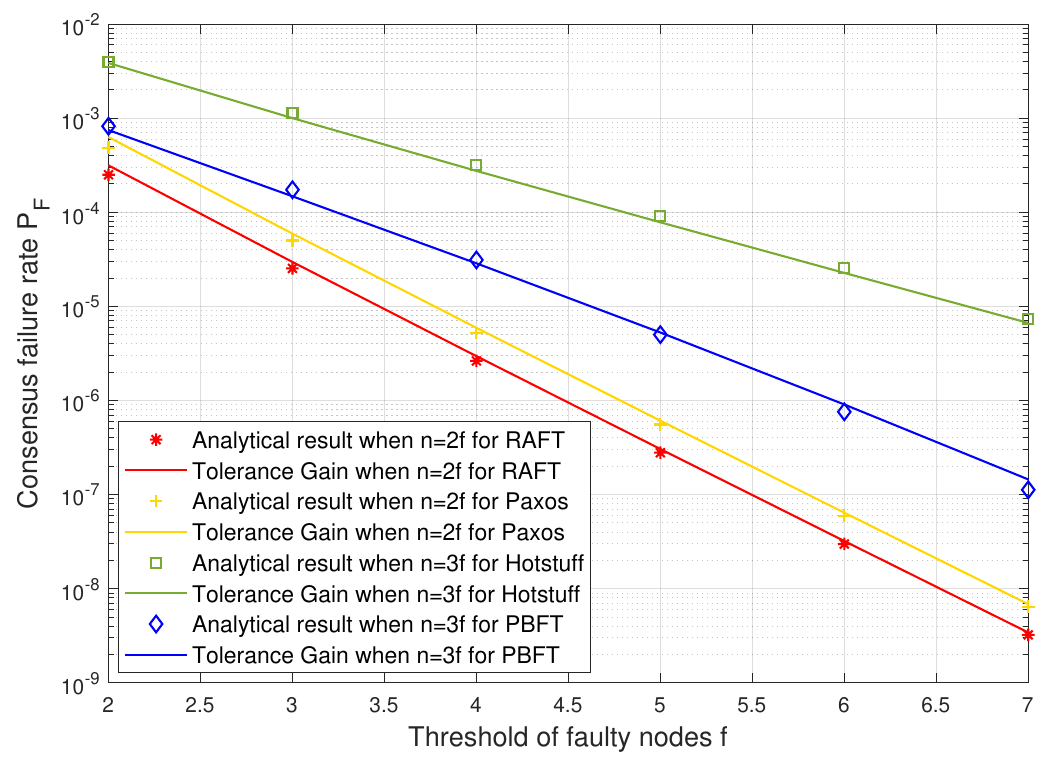}
	\caption{ Tolerance Gain of consensus failure rate when $p_{NF}=0.01$ and $p_{LF}=0.01$ for  Hotstuff and PBFT ($n=3f$), RAFT and single-decree Paxos ($n=2f$)}
	\label{fig_TG}
\end{figure}

\textbf{Test of Method 2:}
Method 2 optimizes the consensus failure rate and latency by effective power allocation without increasing communication resources while keeping the number of nodes constant. Regarding the communication model parameters, we set $n=8$, $\gamma_{th}$ as 10 dB, $P_{tr, total}$ as 1.6 W, $P_{noise}$ as $10^{-9}$ W, and  $Gain_i (dB)$ as [-68, -70, -44, -69,-65, -41, -54,-63]. \textcolor{black}{Through solving the optimization problem in Section \ref{Optimization of Latency and Consensus Reliability}, the theoretical consensus failure rates before and after optimization are 0.0081 and 0.0005, respectively}.

In Figure \ref{optimization_method_2}, the experimental results demonstrate the comparison between the latency before and after power optimization. The latency degradation caused by the link loss rate is mitigated. The experiment shows the communication power optimization method is an effective way without increasing resources in practical wireless network environments to improve the performance of consensus systems.

\textbf{One finding of Figure \ref{fig: intro_fig}, \ref{Appendix_fig_1} and \ref{optimization_method_2}}:
One finding in Figure \ref{fig: intro_fig}, \ref{Appendix_fig_1}, and \ref{optimization_method_2} is that the latency of consecutive instances experiences fluctuations, i.e., increasing then decreasing. This comes from the queuing delay in RAFT. An instance's delay abruptly increases because many links were lost and the system failed in one log replication attempt, thus it reattempts and the consensus delay doubles. Queuing occurs and the delays of subsequent instances increase because RAFT requires instances with smaller indices to be firstly committed. As time grows, the effects of queuing delay gradually decrease, and instances arriving after a period of time are no longer affected by the queuing delay of previous instances.

\subsection{Numerical Results of Reliability Gain and Tolerance Gain}
\label{Numerical Results of Consensus Failure Rate}  
Through Theorem \ref{th1} and \ref{multiAas1}, we obtain the analytical and approximation of the consensus reliability. Figure \ref{fig_approx} shows the approximate and non-approximate analytical results of consensus failure rate of four different consensus protocols when $n=12$, $p_{NF}=0.01$, and $p_{LF}$ varies. This indicates that the approximation in Theorem \ref{multiAas1} is close to the analytical results proposed in Theorem \ref{th1}, thus validates the approximations.

Reliability Gain and Tolerance Gain clearly show the linear relationship between consensus failure rate and two basic factors, $p_{JF,R}$ and $f$. We illustrate the analytical results in Figure \ref{fig_RG} and Figure \ref{fig_TG}. 
Figure \ref{fig_RG} shows the relationship between the consensus failure rate $P_F$ obtained by Theorem \ref{th1} and the overall joint failure rate $p_{JF,R}$ with $n=12$. We set $p_{NF}$ as 0.001 and change $p_{LF}$. According to Table. \ref{table: equations summary}, $p_{JF,R}$ of different protocols can be represented by $p_{NF}$ and $p_{LF}$. As shown in Figure \ref{fig_RG}, $logP_F$ obtained by Theorem \ref{th1} will be approximately linear to $logp_{JF,R}$ for different consensus protocols. This matches the Reliability Gain proposed in Theorem \ref{the:RG} (the upper lines are for BFT with $f=\lfloor n/3 \rfloor$, while the lower lines are for CFT with $f=\lfloor n/2 \rfloor$).

Theorem \ref{the:RA} demonstrates the linearity of Tolerance Gain for CFT and BFT consensus. This feature is evident in Figure \ref{fig_TG} where we set $p_{NF}=0.01$ and $p_{LF}=0.01$. In Figure \ref{fig_TG}, the consensus failure rate $P_F$ obtained by Theorem \ref{th1} for BFT consensus is approximately linear to $f$ when $n=3f$, while 
for CFT, it is approximately linear to $f$ when $n=2f$.

Figure \ref{the:RA} implies the trade-off between consensus failure rate and the tolerance to higher levels of fault. We could find that the consensus failure rate of CFT algorithms is generally smaller than that of BFT algorithms. In addition, the slopes of CFT algorithm are steeper than that of BFT algorithms, which implies the improvement of consensus reliability by increasing the faulty tolerance threshold $f$ for CFT systems is more effective than that for BFT systems.

\section{Discussion: Improve Consensus Reliability by Protocol Design}
\label{Discussion: Improve Consensus Reliability by Protocol Design}

\textcolor{black}{In this section, we discuss that increasing $r_j$ through adjusting protocol designs could exponentially decrease consensus failure rate. Recall that $r_j$ is determined by the protocol design and shows the $j$-th phase is only dependent on the $(j-r_j)$-th phase. We could obtain that a larger $r_j$ in the protocol will result in a lower consensus failure rate. This is because larger $r_j$ implies the $j$-th phase is independent with more phases between the $(j-r_j)$-th and the $j$-th phase, regardless of whether nodes in these phases have received enough messages to become activated nodes or not. 
\textcolor{black}{One example is to use this property to optimize Hotstuff. The communication structure of Hotstuff is
$G_{Hotstuff}=[A_1^{n-f},B_1^{n-f},A_2^{n-f},B_1^{n-f},A_2^{n-f},B_1^{n-f},A_2^{n-f},B_1^{f+1}]$. This shows that the nodes to be activated in each component A (one-to-many communications) are required to have successfully received messages in the previous components that are A. We could simply adjust the protocol design: use threshold signatures to incorporate replica-verifiable responses in each component that is A, which contains responses from other replicas in the previous phases. If a replica receives a message indicating that more than $n-f$ replicas have responded to the previous primary message (messages from primary nodes), even if it did not receive that previous primary message, it can proceed to the following phases in Hotstuff.
Therefore the communication structure changed as $G_{Hotstuff'}=[A_1^{n-f},B_1^{n-f},A_3^{n-f},B_1^{n-f},A_5^{n-f},B_1^{n-f},A_7^{n-f}, B_1^{f+1}]$, which implies the 1-st, 3-rd, 5-th, and 7-th phases are independent with previous components A. The nodes could be activated as long as they are non-faulty (activated in the 0-th phase) and receive the message in the component A. Let $P_{F,Hotstuff }$ be the consensus failure rate of the original Hotstuff protocol and $P_{F,Hotstuff'}$ be the consensus failure rate of the revised Hotstuff protocol. According to Corollary \ref{degen_simp}, we could obtain $\frac{P_{F,Hotstuff' }}{P_{F,Hotstuff}}\approx \frac{3*2^{f+1}+1}{2^{f+1}+3^{f+1}+2*4^{f+1}}\leq\frac{1}{2^f}$. This shows the consensus failure rate of the revised protocol is exponentially decreasing than the original one.}}

\section{Conclusion and Future Work}
\label{sec: conclusion and future work}

In this paper,  we develop a modular consensus communication framework to characterize the communication process of typical consensus protocols including RAFT, single-decree Paxos, PBFT, and Hotstuff. Through the proposed framework, we investigate the probabilistic consensus network reliability for different protocols. We also obtain that larger consensus failure rate will lead to higher latency and propose methods for latency optimization based on our theoretical results. We finally implement a RAFT system to validate our theoretical analysis and demonstrate that the latency degradation is effectively mitigated.

For future work, the proposed consensus reliability could be further explored in practical implementations of different consensus systems, such as PBFT and Hotstuff, or special observations such as the partial network partition in \cite{10.1145/3447851.3458739}. Moreover, based on our framework, the optimization of consensus failure rate could be further discussed, for example, decreasing the consensus failure rate through more effective protocol designs.

\bibliographystyle{unsrt}
	\bibliography{references}

\clearpage

\appendices
\section{Communication Structure of Representative Consensus Algorithms}
\label{apd: Communication Structure of Representative Consensus Algorithms}
To our knowledge, using the combinations of communication components introduced in Section \ref{sec: Basic Communication Components} can well represent the communication process of many consensus algorithms. As shown in Table \ref{table: popular consensus algorithm summary}, several examples of typical consensus algorithms are given including RAFT \cite{184040}, Paxos \cite{lamport2001paxos}, PBFT \cite{castro1999practical}, and Hotstuff  \cite{yin2019hotstuff}. The analysis procedures of other protocols are similar.

\subsection{Communication Structure of RAFT}
Raft \cite{184040} is a famous Crash-Fault-Tolerant consensus protocol with $f=\lfloor n/2 \rfloor$. In the log replication process of Raft, the leader sends a log entry in the one-to-many communication while followers receiving the leader's log entry will give responses in the many-to-one communication. Once received responses from the majority (i.e., more than $n-f$) of the followers, the leader decides such an entry as \textit{committed}. 
Thus the communication structure of the log replication of RAFT can be represented as $G_{Raft}=[A_{1}^{n-f},B_{1}^{n-f}]$. 

\subsection{Communication Structure of Single-decree Paxos}
Paxos is one of the most classic CFT consensus protocols. Here we do not consider Multi-Paxos which runs Paxos repeatedly and work in a state machine replication pattern, but concentrate on the single-decree Paxos proposed in \cite{lamport2001paxos}.

Paxos has four phases including Prepare, Promise, Propose, and Accept. In the Prepare-Promise phases (stage 1), the proposer sends \textit{prepare} message with number $Num$ to acceptors and receives \textit{promise} messages from the majority of acceptors. Then in the Propose-Accept phase (stage 2), the proposer tells the acceptors what value to accept through a \textit{propose} message, and the acceptors reply with \textit{accepted} message. Compared to the definition of the basic components, the Prepare-Promise phases of Paxos clearly correspond to $A_{1}^{n-f}$ followed by $B_{1}^{n-f}$. In the Propose-Accept phases, note that when an acceptor receives a \textit{propose} request with number $Num$, it only checks whether it has responded to another \textit{prepare} request having a number greater than $Num$. This means that as long as the node is non-faulty, even if it did not receive a \textit{prepare} message in stage 1, it can still respond to a \textit{propose} message in stage 2. Therefore, the \textit{propose} phase corresponds the component $A_{3}^{n-f}$. Therefore, the communication structure of Paxos is $G_{Paxos}=[A_{1}^{n-f},B_{1}^{n-f},A_{3}^{n-f},B_{1}^{n-f}]$. \textcolor{black}{Recall that in the $j$-th phase, $j-r_j=0$ implies the node is required to be AN in phase 0, i.e., to be a non-faulty node} .

\subsection{Communication Structure of PBFT}
\label{sec: structure of PBFT}

Unlike Paxos and Raft that is Crash-fault-tolerant, the Practical Byzantine Fault Tolerance (PBFT) algorithm is a Byzantine-fault-tolerant algorithm and can tolerate up to $f=\lfloor n/3\rfloor$ Byzantine faulty nodes \cite{castro1999practical}. 
There are four phases in the normal case operation of PBFT, including the \textit{pre-prepare}, \textit{prepare}, \textit{commit}, and \textit{reply} phases. In the first \textit{pre-prepare} phase, the primary sends the request from the client to replicas by a \textit{pre-prepare} message. When receiving the \textit{pre-prepare} message, replicas validate the message and accept it (turning into activated) if it is correct. Then every activated replica sends \textit{prepare} messages to other nodes. For those nodes that have been activated in the \textit{pre-prepare} phase, once receiving no less than $n-f-1$ correct prepare messages (this requires ANs in \textit{pre-prepare} phase to be more than $n-f$), a replica will turn into activated in the prepare phase. Then it will send \textit{commit} messages to other nodes. For those nodes activated in the \textit{prepare} phase, if there are no less than $n-f-1$ valid \textit{commit} messages (this requires ANs in \textit{prepare} phase should be more than $n-f$), a replica will turn into activated in the \textit{commit} phase. Finally, it will reply to the client with the result in the \textit{reply} phase. If more than $f+1$ valid \textit{reply} messages are received, the client considers the consensus achieved.
In summary, the normal case operation of PBFT can be represented as $G_{PBFT}=[A_{1}^{n-f},C_{1}^{n-f},C_{1}^{f+1},B_{1}^{f+1}]$ corresponding to \textit{pre-prepare}, \textit{prepare}, \textit{commit} and \textit{reply} phases respectively.

\subsection{Communication Structure of Hotstuff}
\label{Structure of Hotstuff}

Hotstuff \cite{yin2019hotstuff} is a leader-based Byzantine fault-tolerant protocol with linear communication complexity in the number of replicas.
Here we consider one instance of Basic Hotstuff with a stable leader skipping the first \textit{NEW-VIEW} message.
In the first Prepare phase, the leader broadcasts \textit{msg(prepare)} message to all replicas corresponding to component $A_1^{n-f}$. After validation, a replica will send \textit{votemsg(prepare)} to the leader, corresponding to component $B_1^{n-f}$. Then in the  Pre-commit phase, the leader broadcast \textit{msg(pre-commit)} message to all replicas, which will respond to the leader with \textit{votemsg(pre-commit)} once the \textit{msg(pre-commit)} message is justified. Note that a replica without receiving a valid prepare message would not respond to the following messages from the leader (due to the restriction of message type). Thus this phase should be represented as component $A_2^{n-f}$ and $B_1^{n-f}$, instead of $A_1^{n-f}$ and $B_1^{n-f}$. The Commit phase is similar to the Pre-commit phase, where components $A_2^{n-f}$ and $B_1^{n-f}$ can be used to represent the one-to-many and many-to-one communication respectively. 
In the Decide phase of Hotstuff, the leader broadcasts \textit{msg(decide)} to replicas receiving all the messages before, corresponding to the component $A_2^{n-f}$. The replicas receiving this message will execute and output this value. The consensus is considered to be achieved after the client receives replies from more than $f+1$ replicas, which corresponds to $B_1^{f+1}$. 
Therefore, the total communication of Hostuff can be represented as $G_{Hotstuff}=[A_1^{n-f},B_1^{n-f},A_2^{n-f},B_1^{n-f},A_2^{n-f},B_1^{n-f},A_2^{n-f},B_1^{f+1}]$.

\section{Proof of Theorem \ref{th1}}
\label{Proof: Theorem th1}

Let $P\left(S_{x_0}^{G_0},S_{x_1}^{G_1},...,S_{x_{\left|G\right|}}^{G_{\left|G\right|}}\right)$ denote the probability that the set of ANs in the $j$-th phase ($j=0,1,...,|G|$) is $S_{x_j}^{G_j}$. For $j$=0, consensus protocols requires the number of non-faulty nodes is no less than $n-f$, thus $n\geq x_0=|S_{x_0}^{G_0}|\geq n-f$. For $j=1,2,..|G|$, according to the basic communication component proposed in Section \ref{sec: Basic Communication Components}, $M_j$ shows the minimum number of ANs in each phase to achieve consensus. Thus we have $x_j\geq M_j$. According to Lemma \ref{rv_p_1}, we have $x_{j}\le x_{j-r_j}$ and $S_{x_j}^{G_j} \subseteq S_{x_{j-r_j}}^{G_{j-r_j}}$. Thus we could express the consensus successful rate as:
\begin{equation}
\begin{split}
    P_C=&\sum_{\substack{n\geq x_0 \geq n-f\\S_{ x_0}^{G_0}\subseteq \mathrm{\Omega}}}\!...\! \sum_{\substack{x_{j-r_j}\geq x_j \geq M_j\\S_{ x_j}^{G_j}\subseteq S_{x_{j-r_j}}^{G_{j-r_j}}}} 
\! ... \! \sum_{\substack{x_{|G|-r_{|G|}}\geq x_{|G|} \geq M_{|G|}\\S_{ x_{|G|}}^{G_{|G|}}\subseteq S_{x_{|G|-r_{|G|}}}^{G_{|G|-r_{|G|}}}}}
\\ & P\left(S_{x_0}^{G_0},S_{x_1}^{G_1},...,S_{x_{\left|G\right|}}^{G_{\left|G\right|}}\right)
\end{split}
\end{equation}
According to the chain rule and Lemma \ref{rv_p_2}, we have 
\begin{equation}
P\left(S_{x_0}^{G_0},S_{x_1}^{G_1},\ldots,S_{x_{\left|G\right|}}^{G_{\left|G\right|}}\right)=P\left(S_{x_0}^{G_0}\right)\prod_{j=1}^{\left|G\right|}P(S_{x_j}^{G_j}|S_{x_{j-r_j}}^{G_{j-r_j}})
\end{equation}
Here $P\left(S_{x_0}^{G_0}\right)$ is the probability that the nodes in $S_{x_0}^{G_0}$ are non-faulty and outside the set are faulty. $P(S_{x_j}^{G_j}|S_{x_{j-r_j}}^{G_{j-r_j}})$ is the probability that the nodes in $S_{x_j}^{G_j}$ are activated and in $\complement_{S_{x_{j-r_j}}^{G_{j-r_j}}}S_{x_j} ^{G_j}$ are not activated, We could express them as
\begin{equation}
\begin{split}
P\left(S_{x_0}^{G_0}\right)=&Pr((\bigcap_{u\in S_{x_0}^{G_0}}{u\ is\ non-faulty})\bigcap \\&{(\bigcap_{ v\in\complement_\mathrm{\Omega}S_{x_0}^{G_0}}{\ v\ is\ faulty}}))
\end{split}
\end{equation}
\begin{equation}
\begin{split}
P\left(S_{x_j}^{G_j}\middle| S_{x_{j-r_j}}^{G_{j-r_j}}\right)=&Pr((\bigcap_{u\in S_{x_j} ^{G_j}}{\ u\ is\ activated})\bigcap\\
&{(\bigcap_{v\in\complement_{S_{x_{j-r_j}}^{G_{j-r_j}}}S_{x_j} ^{G_j}}{\ v\ is\ not\ activated}}))
\end{split}
\end{equation}
Considering each node being activated in different phases are independent, we have Eq.(\ref{rv_1A_eq_3})-(\ref{rv_1A_eq_4}). Thus Theorem \ref{th1} is proved.

\section{Consensus Reliability for Multiple Instances}
\label{consecutive instances}

We have considered the consensus reliability of a single instance in Section \ref{Markov Property of Consensus} and \ref{appro_sim}. Assuming different instances are independent, we show that the consensus reliability of consecutive multiple instances can be extended 
based on that of a single instance. 

Note that link failures and node failures may have different influences for multiple instances. 
A link failure is transient, while a node failure is durative since faulty nodes are assumed not to be recovered in a sufficiently long time after the failure. Thus it is more catastrophic for node failure in consecutive instances. 

We obtain the consensus reliability of consecutive $W$ instances $P_C^{Mul}$ as follow:
\begin{theorem}
    \label{theorem:consecutive instances}
    The consensus success rate of consecutive $W$ instances can be obtained as:
    \begin{equation}
    	P_C^{Mul}\approx WP_C-(W-1)P_C^{Node}
    \end{equation}
    where $P_C$ is the consensus reliability of a single instance, and $P_C^{Node}$ is the consensus reliability of a single instance only considering the possible node failures (i.e., all the link failure rates are 0). The approximation will be tighter if the link failure rate is closer to 0.
\end{theorem}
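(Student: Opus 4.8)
The plan is to exploit the asymmetry between node and link failures that is built into the static corruption model: the set of faulty nodes is fixed once at the start and persists across all $W$ instances, whereas link losses are transient and resampled independently in each instance. This immediately suggests conditioning on the node configuration and then treating the link behaviour of the $W$ instances as i.i.d.\ given that configuration.

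First I would partition the probability space according to the phase-$0$ node configuration $S_{x_0}^{G_0}$, which records exactly which nodes are non-faulty. Writing $Q(S_{x_0}^{G_0})$ for the conditional probability that a single instance reaches consensus given this configuration (obtained by summing the link-dependent factors $\prod_{j=1}^{|G|}P(S_{x_j}^{G_j}|S_{x_{j-r_j}}^{G_{j-r_j}})$ over all admissible activated-node chains), Theorem \ref{th1} rewrites as
\begin{equation}
P_C=\sum_{S_{x_0}^{G_0}}P\left(S_{x_0}^{G_0}\right)Q\left(S_{x_0}^{G_0}\right).
\end{equation}
Call a configuration \emph{feasible} precisely when $Q(S_{x_0}^{G_0})>0$; infeasible configurations (too few non-faulty nodes) contribute nothing to $P_C$. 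Sending all link failure rates to $0$ forces $Q\to 1$ on every feasible configuration, so that $P_C^{Node}=\sum_{\text{feasible}}P(S_{x_0}^{G_0})$ collects exactly the node-feasible weight.

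Next I would use that, conditioned on a fixed node configuration, the link-loss events of the $W$ instances are mutually independent and identically distributed; hence all $W$ instances succeed with conditional probability $Q(S_{x_0}^{G_0})^W$, giving the \emph{exact} identity
\begin{equation}
P_C^{Mul}=\sum_{\text{feasible}}P\left(S_{x_0}^{G_0}\right)Q\left(S_{x_0}^{G_0}\right)^{W}.
\end{equation}
Here the sum already restricts to feasible configurations, since infeasible ones have $Q=0$ and thus contribute $Q^W=0$. The final step is the first-order Taylor expansion of $x\mapsto x^{W}$ about $x=1$, namely $Q^{W}\approx WQ-(W-1)$, which is exact at $Q=1$ and has matching derivative $W$ there. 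Applying it term-by-term over the feasible configurations and splitting the two summands produces $W\sum_{\text{feasible}}P(S_{x_0}^{G_0})Q(S_{x_0}^{G_0})-(W-1)\sum_{\text{feasible}}P(S_{x_0}^{G_0})$; the first sum equals $P_C$ (infeasible terms vanish because $Q=0$ there) and the second equals $P_C^{Node}$, reproducing $WP_C-(W-1)P_C^{Node}$.

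The main obstacle I anticipate is justifying that the linearization may be applied only where it is accurate. It is essential to restrict the Taylor step to feasible configurations, where $Q$ is close to $1$: naively applying $WQ-(W-1)$ to an infeasible configuration (with $Q=0$) would spuriously contribute $-(W-1)$ rather than $0$. The remainder of the expansion is of order $(1-Q)^2$, and since $1-Q$ on a feasible configuration is driven entirely by link losses, this error vanishes as the link failure rate tends to $0$ --- precisely the stated tightness regime. I would make this rigorous by bounding the second-order remainder uniformly over feasible configurations, which simultaneously clarifies the remark's intuition: durative node failures are already absorbed into the weights $P(S_{x_0}^{G_0})$ and do not degrade the approximation, whereas transient link failures are exactly what controls its error.
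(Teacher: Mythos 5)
Your proposal is correct and takes essentially the same route as the paper's own proof: the paper likewise conditions on the phase-$0$ node configuration via the full-partition formula $P_C^{Mul}=\sum_{n\ge x_0\ge n-f}\sum_{\Omega\supseteq S_{x_0}^{G_0}}P\bigl(S_{x_0}^{G_0}\bigr)\bigl(P_C|_{S_{x_0}^{G_0}}\bigr)^W$, exploits that link losses are i.i.d.\ across instances given the persistent node configuration, and applies the same linearization $x^W\approx Wx-(W-1)$ near $x=1$. Your explicit restriction of the Taylor step to feasible configurations (where the conditional success probability is positive, hence near $1$) is the same safeguard the paper builds in implicitly by summing only over configurations with $x_0\ge n-f$ non-faulty nodes.
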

\begin{proof}
\label{proof_theorem:consecutive instances}

We can first consider the conditional probability of single consensus instance for a given set of non-faulty nodes, $P_C|_{S_{x_0}^{G_0}}$ and consecutive $W$ instances are successful (i.e., $(P_C|_{S_{x_0}^{G_0}})^W$).  Then we use full-partition formula to obtain the  consensus reliability of consecutive instances as follow. 
\begin{equation}
\label{pCMul_1}
    P_C^{Mul}= \sum_{n\geq x_0\geq n-f}\sum_{ \mathrm{\Omega}\supseteq S_{x_0}^{G_0}}{P(S_{x_0}^{G_0})}(P_C|_{S_{x_0}^{G_0}})^W
\end{equation}

Since the link failure rate is always closer to 0, $1-P_C|_{S_{x_0}^{G_0}}$ is close to 0. We have
\begin{equation}
	\begin{split}
	\label{proof1_2}
	(P_C|_{S_{x_0}^{G_0}})^W=(1-(1-P_C|_{S_{x_0}^{G_0}}))^W\approx WP_C|_{S_{x_0}^{G_0}}-(W-1)
	\end{split}
\end{equation}

By using Eq. (\ref{proof1_2}), Eq. (\ref{pCMul_1}) can be further transformed as
\begin{equation}
	\begin{split}
		P_C^{Mul}&\approx \sum_{n\geq x_0\geq n-f}\sum_{ \mathrm{\Omega}\supseteq S_{x_0}^{G_0}}{P(S_{x_0}^{G_0})}(WP_C|_{S_{x_0}^{G_0}}-(W-1))\\&= WP_C-(W-1)\sum_{n\geq x_0\geq n-f}\sum_{ \mathrm{\Omega}\supseteq S_{x_0}^{G_0}}{P(S_{x_0}^{G_0})}\\&= WP_C-(W-1)P_C^{Node}
	\end{split}
\end{equation}
where $P_C$ is the consensus reliability of a single instance, and $P_C^{Node}$ is the consensus reliability of a single instance only considering the possible node failures (i.e., all the link failure rates are 0). The approximation will be tighter if the link failure rate is closer to 0. Thus Theorem \ref{theorem:consecutive instances} has been proved.
\end{proof}

Based on the consensus reliability of a single instance, the consensus reliability of consecutive $W$ instances can be obtained through Theorem \ref{theorem:consecutive instances}. It can be approximately evaluated that how many consecutive instances will cause at least one consensus instance failure.

\section{Proof of Theorem \ref{multiAas1}}
\label{proof of multiAas1}

We first propose the methods of joint failure and power series to provide a flexible way of simplifying the consensus reliability for a special case. Then we extend the special case to generic cases to prove Theorem \ref{proof of multiAas1}.

\subsection{Part 1: Method of Joint Failure}
\begin{lemma} Method of Joint Reliability:
\label{rv_th_jfv}
If $M_j=n-f$ and $r_j=1$ for any $j=1,2,...|G|$, the consensus reliability, $P_C$, can be simplified as:
\begin{equation}
\label{as1}
P_C\approx \sum_{\substack{n\geq x\geq n-f\\ \mathrm{\Omega}\supseteq S_x}} P\left(S_x\right) =\sum_{\substack{n\geq x\geq n-f\\ \mathrm{\Omega}\supseteq S_x}}{\prod_{u\in S_x} P_u^J\prod_{v\in\complement_\mathrm{\Omega}S_x}\left( 1-P_v^J\right)}
\end{equation}
where $S_x$ is the running variable with $|S_x|=x$ representing the set of nodes always being activated from phase 0 to the final phase, and 
\begin{equation}
\begin{split}
\label{as2}
P_i^J=\prod_{j=0}^{\left|G\right|}P_i^{G_j}
\end{split}
\end{equation}
is the activation probability of node $i$ from phase 0 to the final phase (joint reliability).

\end{lemma}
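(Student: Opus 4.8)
The plan is to start from the exact expression for $P_C$ in Theorem \ref{th1}, specialized to the first-order Markov case $r_j=1$, and to collapse the nested sum by a telescoping argument. With $r_j=1$ for every $j$, the trajectory $S_{x_0}^{G_0}\supseteq S_{x_1}^{G_1}\supseteq\cdots\supseteq S_{x_{|G|}}^{G_{|G|}}$ is a decreasing chain by Lemma \ref{rv_p_1}. The first key observation I would make is that, since $M_j=n-f$ for all $j$ and the sets are nested, the per-phase constraints $x_j\ge n-f$ are all implied by the single final constraint $x_{|G|}\ge n-f$ (because $x_j\ge x_{|G|}$). Hence consensus succeeds if and only if at least $n-f$ nodes remain activated through the last phase, and I would rewrite $P_C=\sum_{n\ge x\ge n-f}\sum_{S_x\subseteq\Omega}Q(S_x)$, where $Q(S_x)$ denotes the probability that $S_x$ is \emph{exactly} the set of nodes activated in every phase.

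Next I would compute $Q(S_x)$ by conditioning on how each node leaves the activated set. A node $u\in S_x$ must be activated in phases $0,1,\dots,|G|$, contributing $\prod_{j=0}^{|G|}P_u^{G_j}=P_u^J$. A node $v\notin S_x$ instead drops out at some ``exit phase'': either it is faulty at phase $0$, or it stays activated through phase $e$ and then fails to activate at phase $e+1$. Summing its contribution over all exit phases, writing $p_j:=P_v^{G_j}$, produces a telescoping sum
\begin{equation*}
(1-p_0)+\sum_{e=0}^{|G|-1}\Big(\prod_{j=0}^{e}p_j\Big)(1-p_{e+1}) = 1-\prod_{j=0}^{|G|}p_j = 1-P_v^J .
\end{equation*}
For components of type $A$ and $B$ the activation probability $P_i^{G_j}$ is simply one link reliability, independent of the other nodes and of the other phases, so all node contributions factor and the telescoping yields $Q(S_x)=\prod_{u\in S_x}P_u^J\prod_{v\in\complement_{\Omega} S_x}(1-P_v^J)$ \emph{exactly}. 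Substituting this into the reduced sum reproduces Eq. (\ref{as1}), which shows that for any protocol whose phases are all $A$ or $B$ the claimed identity holds with equality, not merely as an approximation.

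The hard part will be justifying the approximation once type-$C$ components are present. There $P_i^{G_j}$ (Eq. (\ref{rv_1A_eq_5})) depends on how many \emph{other} nodes survived the previous phase, so the per-node contributions no longer factor across nodes and the clean telescoping above breaks. The approximation underlying Lemma \ref{rv_th_jfv} is precisely to treat the event ``node $i$ is activated in every phase'' as independent across nodes with marginal probability $P_i^J=\prod_{j}P_i^{G_j}$, thereby discarding this inter-node coupling. I would make this quantitative by replacing the set-dependent factor in a $C$-phase with its value at a representative previous-phase cardinality, and then bounding the resulting error by the fluctuation of $x_{j-1}$ about its typical value; since the surviving set concentrates and since the correction vanishes identically whenever no $C$-phase occurs, the product form remains a faithful approximation. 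Controlling this coupling error is the only non-routine step, whereas the constraint reduction and the telescoping identity are exact and form the backbone of the argument.
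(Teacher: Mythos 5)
Your proposal is correct, and it arrives at the joint-reliability formula through the same underlying identity as the paper, but with the bookkeeping organized differently. The paper's proof (Appendix, Part~4) is phase-centric: it keeps all nodes in play and repeatedly collapses the two earliest phases by exchanging summation order and applying the merge $(1-P_{v}^{G_0})+P_{v}^{G_0}(1-P_{v}^{G_1})=1-P_{v}^{G_0}P_{v}^{G_1}$ over all sets sandwiched between $S_{x_1}^{G_1}$ and $\Omega$, iterating this $|G|-1$ times (Eqs.~\eqref{proof_2_1}--\eqref{proof_2_6}). You work node-centrically: you first state explicitly the exact constraint reduction (nestedness of the activated sets makes every intermediate requirement $x_j\ge n-f$ redundant given $x_{|G|}\ge n-f$ --- a fact the paper uses only silently when it reorders sums), and then factor the trajectory probability over nodes, telescoping each excluded node's exit-phase sum $(1-p_0)+\sum_{e}\bigl(\prod_{j\le e}p_j\bigr)(1-p_{e+1})=1-\prod_{j}p_j$ in one shot; your per-node telescoping and the paper's iterated pairwise merge are transposed evaluations of the same double sum. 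Your presentation is more elementary and makes transparent two things the paper leaves implicit: that Eq.~\eqref{as1} is an exact identity for protocols built only from A/B components, and that the sole source of approximation is the inter-node coupling introduced by C components. What the paper's heavier phase-by-phase machinery buys is reuse --- the same pairwise-collapse manipulation is exactly what the proof of Theorem~\ref{multiAas1} applies to forked (tree) dependence structures --- and concreteness in the C case: the paper pins the fix down as the cardinality-averaged, power-mean quantity $\bar{P}_i^{G_j}$ of Eqs.~\eqref{as3}--\eqref{as4}, whereas you only sketch ``replace by a representative cardinality and invoke concentration'' without carrying it out. That last step is the one place your write-up is thinner than the paper's; since the paper itself only defines the average and asserts tightness rather than bounding the error, this is a difference of completeness rather than a gap in correctness, but a full proof along your lines would still need to specify the set-independent surrogate for $P_i^{G_j}$ explicitly so that the node-wise factorization goes through.
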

Here $r_j=1$ for $j=1,2,...|G|$ implies the first-order Markov property is held. Please see the proof in Appendix \ref{proof of joint failure}.

\textcolor{black}{When there is no graph C in the communication structure of the consensus protocol $G$, $P_C$ in Eq. (\ref{as1}) is not approximated since the derivations in Appendix \ref{proof of joint failure} are identity transformations. 
However, when the consensus protocol has graph C in the $j$-th phase, according to Eq. (\ref{rv_1A_eq_5}), $P_i^{G_j}$ is determined by the set $S_{x_{j-r_j}}^{G_{j-r_j}}$, which will vary in different summation paths. 
When the graph of $j$-th phase is C, we use the approximation form of Eq. (\ref{rv_1A_eq_5}), and the approximation will be tight if the link reliability is close to 1,}
\begin{equation}
\begin{split}
\label{as3}
\bar{P}_i^{G_j}= \sqrt[f+1]{\frac{\sum_{w=n-f-1}^{n-1}{(\bar{P}_i^{G_j}(w))^{f+1}}}{f+1}} 
\end{split}
\end{equation}
where $\bar{P}_i^{G_j}(w)$ is
\begin{equation}
\begin{split}
\label{as4}
\bar{P}_i^{G_j}(w)=\frac{\sum_{\Phi_{w}\in (\Omega\setminus \{i\})}{P_i^{G_j}(\{ \Phi_w, i\})} }{\binom{n-1}{w}}
\end{split}
\end{equation}
Here $\Phi_{w}$ is a running variable with $|\Phi_w|=w$, $P_i^{G_j}(\{ \Phi_w, i\})$ is the left side in Eq. (\ref{rv_1A_eq_5}) replacing $S_{x_{j-r_j}}^{G_{j-r_j}}$ as $\{ \Phi_w, i\}$ .

\subsection{Part 2: Method of Power Series}
We show the second simplification method as follows.

\begin{lemma}  Power Series Method (\cite{Li}):
\label{rv_th_ps}
If $M_j=n-f$ and $r_j=1$ for any $j=1,2,...|G|$, the consensus failure rate, denoted as $P_F=1-P_C$, can be simplified as:
    \begin{equation}
    	\label{RV:PS}
    	P_F=\sum\limits_{t=f+1}^{n}a_{t}Q_{t}
    \end{equation}
    where $Q_{t}=\sum_{\mathrm{\Omega}\supseteq S^{JF}_{t}}{\prod_{u\in S^{JF}_{t}} P_u^{JF}}$
    is the summation of the product of $t$ joint failure rates, which can be considered as summation of power of $t$, $S^{JF}_{t}$ is a running variable with $|S^{JF}_{t}|=t$, $P_u^{JF}=1-P_u^{J}$ is the joint failure rate, and $a_{t}=(-1)^{t-f-1}\binom{t-1}{f}$ is the coefficient of the series expansion.
\end{lemma}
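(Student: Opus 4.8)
The plan is to reduce the statement to a single counting identity by re-reading Lemma \ref{rv_th_jfv} as a threshold event. By that lemma, $P_C$ equals the probability that at least $n-f$ nodes stay activated from phase $0$ through the final phase, where node $u$ survives independently with probability $P_u^J$ and ``joint-fails'' with probability $P_u^{JF}=1-P_u^J$. Passing to the complement, $P_F=1-P_C$ is exactly the probability that at least $f+1$ nodes joint-fail. Writing this event as a disjoint union indexed by the set $S$ of joint-failed nodes gives the starting point
\begin{equation}
P_F=\sum_{\substack{\mathrm{\Omega}\supseteq S\\ |S|\ge f+1}}\prod_{u\in S}P_u^{JF}\prod_{v\in\complement_\mathrm{\Omega} S}\left(1-P_v^{JF}\right).
\end{equation}

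Next I would expand the survival factors to turn everything into a linear combination of the monomials $\prod_{w\in T}P_w^{JF}$. Using $\prod_{v\in\complement_\mathrm{\Omega} S}(1-P_v^{JF})=\sum_{U\subseteq\complement_\mathrm{\Omega} S}(-1)^{|U|}\prod_{u\in U}P_u^{JF}$ and setting $T=S\sqcup U$, each monomial indexed by a set $T$ with $|T|=t$ receives a contribution from every way of splitting $T$ into a failed part $S$ (with $|S|\ge f+1$) and its complement $U$, carrying the sign $(-1)^{|U|}=(-1)^{t-|S|}$. Hence the coefficient of $\prod_{w\in T}P_w^{JF}$ is $\sum_{s=f+1}^{t}(-1)^{t-s}\binom{t}{s}$, which depends only on $t$ and not on the particular set $T$.

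The key step is then the combinatorial identity $\sum_{s=f+1}^{t}(-1)^{t-s}\binom{t}{s}=(-1)^{t-f-1}\binom{t-1}{f}$. I would prove it by the substitution $i=t-s$ together with the standard alternating-tail identity $\sum_{i=0}^{k}(-1)^i\binom{t}{i}=(-1)^k\binom{t-1}{k}$ (itself provable by telescoping via $\binom{t}{i}=\binom{t-1}{i}+\binom{t-1}{i-1}$), applied with $k=t-f-1$ and simplified through $\binom{t-1}{t-f-1}=\binom{t-1}{f}$. Since this coefficient equals the same $a_t=(-1)^{t-f-1}\binom{t-1}{f}$ for every $T$ of size $t$, summing over all such $T$ collects the monomials into $Q_t=\sum_{|S^{JF}_t|=t}\prod_{u\in S^{JF}_t}P_u^{JF}$, and restricting $t$ to $f+1\le t\le n$ yields $P_F=\sum_{t=f+1}^{n}a_tQ_t$, as claimed.

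I expect the main obstacle to be the bookkeeping in the double sum: keeping the disjoint decomposition $T=S\sqcup U$ and its sign straight while interchanging the order of summation, and then recognizing the resulting inner sum as the alternating binomial tail with closed form $(-1)^{t-f-1}\binom{t-1}{f}$. Everything else---the reduction via Lemma \ref{rv_th_jfv} and the final regrouping into $Q_t$---is routine once this coefficient is pinned down. An alternative route would invoke the general ``at least $m$ events'' inclusion--exclusion formula directly with $m=f+1$ and the inclusion-exclusion sums equal to $Q_t$, but I prefer the explicit expansion above because it makes the sign tracking and the appearance of $a_t$ fully transparent.
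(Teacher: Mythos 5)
Your proof is correct. Note, however, that the paper does not actually prove Lemma \ref{rv_th_ps} anywhere: it imports the result from the cited reference \cite{Li} and uses it as a black box in the proof of Theorem \ref{multiAas1}. So there is no in-paper argument to compare against, and your derivation stands as a self-contained replacement for the citation. Your route is the natural one: Lemma \ref{rv_th_jfv} exhibits $P_C$ as the probability that at least $n-f$ nodes jointly survive under independent per-node survival probabilities $P_u^J$, so $P_F$ is the probability that at least $f+1$ nodes jointly fail; expanding the survival factors $\prod_{v}(1-P_v^{JF})$ by inclusion--exclusion, regrouping by the support set $T=S\sqcup U$, and evaluating the resulting coefficient via the alternating-tail identity
\begin{equation}
\sum_{s=f+1}^{t}(-1)^{t-s}\binom{t}{s}=(-1)^{t-f-1}\binom{t-1}{f}
\end{equation}
gives exactly $a_t$, and the monomials of each degree $t$ collect into $Q_t$. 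I checked the coefficient identity (it follows by telescoping with Pascal's rule, as you say) and spot-checked small cases ($n=3$, $f=1$ gives $P_F=Q_2-2Q_3$, matching the direct computation), so the bookkeeping you flagged as the main risk is in order. One small caveat worth stating explicitly if this were incorporated: the equality in Eq. (\ref{RV:PS}) is an algebraic identity relative to the joint-reliability representation of Lemma \ref{rv_th_jfv}; when the structure contains graph C that representation is itself only approximate, so the "exactness" of the power-series form inherits that approximation.
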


Lemma \ref{rv_th_ps} takes the view of failure probability. Particularly, $Q_{t}$ can be considered as summation of power of $t$, since the term $\prod_{u\in S^{JF}_{t}} P_u^{JF}$ is the product of $t$ joint failure rates. In practical application scenarios, the joint failure rate $P^{JF}_{u}$ is usually small and close to $0$. Otherwise, a large number of nodes and links would fail so that the system might not carry out any consensus instance. 
Taking advantage of this characteristic, the term $Q_{t}$ will become smaller with the increment of $t$. 
Thus, $Q_{t}$ with high-order power $t$ could be omitted to achieve the purpose of simplifying $P_F$. This provides a flexible way to simplify $P_F$ according to actual approximate accuracy requirements. 
\begin{corollary}
\label{oneAsimp}
After only retaining the first non-zero term of Lemma \ref{rv_th_ps}, we have
  \begin{equation}
    	\label{RV:simp}
    	P_{F}\approx Q_{f+1}=\sum_{\mathrm{\Omega}\supseteq S^{JF}_{f+1}}{\prod_{u\in S^{JF}_{f+1}} P_u^{JF}} 
    \end{equation}
\end{corollary}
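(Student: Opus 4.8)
The plan is to read the corollary straight off the exact series expansion in Lemma~\ref{rv_th_ps} and then justify discarding everything but its first term. First I would recall that Lemma~\ref{rv_th_ps} gives the \emph{exact} identity $P_F=\sum_{t=f+1}^{n}a_{t}Q_{t}$ with $a_{t}=(-1)^{t-f-1}\binom{t-1}{f}$ and $Q_{t}=\sum_{\mathrm{\Omega}\supseteq S^{JF}_{t}}\prod_{u\in S^{JF}_{t}}P_u^{JF}$. I would then isolate the lowest-order summand $t=f+1$ and evaluate its coefficient: $a_{f+1}=(-1)^{(f+1)-f-1}\binom{(f+1)-1}{f}=(-1)^{0}\binom{f}{f}=1$, so the first (and, since $n\ge f+1$, non-zero) term of the series is exactly $Q_{f+1}$. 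This already produces the right-hand side of the claimed approximation; what remains is to show that the tail $\sum_{t=f+2}^{n}a_{t}Q_{t}$ is negligible.

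Next I would exploit the smallness of the joint failure rates. Each $Q_{t}$ is a sum of $\binom{n}{t}$ products, every product being a product of exactly $t$ of the quantities $P_u^{JF}$; as the paper notes, in any workable deployment the $P_u^{JF}$ are small (otherwise too many nodes and links fail for any instance to proceed). Writing $p=\max_{u}P_u^{JF}$ gives the crude bound $Q_{t}\le\binom{n}{t}p^{t}$, so each successive term of the series carries an extra power of $p$. Consequently the leading term $Q_{f+1}$ is of order $p^{f+1}$ while the entire tail is $O(p^{f+2})$, and the relative error of the approximation $P_F\approx Q_{f+1}$ is $O(p)$, vanishing as the failure rates tend to $0$. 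This is precisely the regime in which the corollary is asserted.

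The main obstacle is making the neglect of the tail quantitative rather than merely asymptotic, because the coefficients $a_{t}$ both grow combinatorially and alternate in sign, so a careless triangle-inequality estimate could in principle swamp the geometric decay supplied by the powers of $p$. I would handle this by noting that $|a_{t}|=\binom{t-1}{f}$ and $t\le n$, so for fixed $n$ the combinatorial prefactors $|a_{t}|\binom{n}{t}$ are bounded by a constant $C=\sum_{t=f+2}^{n}\binom{t-1}{f}\binom{n}{t}$; hence $\bigl|\sum_{t=f+2}^{n}a_{t}Q_{t}\bigr|\le C\,p^{f+2}$ for $p\le 1$, which dominates the error uniformly. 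Because this bound is obtained by absolute values, no cancellation among the alternating terms needs to be exploited, and the coarse estimate suffices to establish the approximation in the small-failure-rate regime stated in the corollary.
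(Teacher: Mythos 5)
Your proposal is correct and takes essentially the same approach as the paper: the paper also obtains the corollary by truncating the exact series of Lemma~\ref{rv_th_ps}, noting that $a_{f+1}=\binom{f}{f}=1$ and that each $Q_t$ is a sum of products of $t$ small joint failure rates, so all terms with $t\ge f+2$ are of higher order and can be dropped. Your explicit tail estimate $\bigl|\sum_{t=f+2}^{n}a_tQ_t\bigr|\le C\,p^{f+2}$ with $p=\max_u P_u^{JF}$ simply makes quantitative the qualitative smallness argument the paper gives.
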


\subsection{Part 3: Proof of Theorem \ref{multiAas1}}
Then based on Lemma \ref{rv_th_jfv} and \ref{rv_th_ps}, we could prove Theorem \ref{multiAas1}. 

\begin{theorem}
Given the communication structure $G$, consider all the dependence relationships in $G$ as a graph where each phase is a node and each phase dependence relationship $r_j$ is an edge. All the dependence relationships form a tree, where the 0-th phase is the root node. Assume that there are $L$ leaf nodes and let $R_l(l=1,2,...,L)$ denote the set of phases in the path between the root node and the $l$-th leaf node, the consensus failure rate could be approximated as:
\begin{equation}
\label{multiAas1_1}
P_F\approx\sum_{l=1}^{L}P_F^{R_l}
\end{equation}
where $P_F^{R_l}$ is
\begin{equation}
\label{multiAas2}
P_F^{R_l} = \sum_{\mathrm{\Omega}\supseteq S_{f+1}^{JF,R_l}}\prod_{u\in S_{f+1}^{R_l}} P_u^{ R_l}
\end{equation}
where $P_u^{JF,R_l}=1-\prod_{k \in R_l}P_u^{G_k}$ and $P_u^{G_k}$ is the probability that $u$ becoming AN in the $k$-th phase.  
The approximation will be tight if the node/ link failure rates are close to 1. 
\end{theorem}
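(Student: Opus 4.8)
The plan is to reduce the high-order Markov structure $G$ to a collection of first-order Markov chains — one per root-to-leaf path of the dependence tree — and then recombine the per-path failure probabilities by a first-order union/inclusion--exclusion argument. First I would make precise the observation that along any path $R_l=\{0,k_1,k_2,\ldots\}$ from the root to a leaf, each phase $k_i$ has its tree-parent equal to its immediate predecessor $k_{i-1}$ on that path, since the parent of phase $j$ is by definition phase $j-r_j$. Hence, after relabelling the phases of $R_l$ consecutively as $0,1,2,\ldots$, the restriction of the activation process to $R_l$ is a \emph{first-order} Markov communication structure with $r_j=1$ throughout, so Lemmas \ref{rv_th_jfv} and \ref{rv_th_ps} apply verbatim to each individual path.

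Next, for a fixed path $R_l$ I would invoke the method of joint reliability (Lemma \ref{rv_th_jfv}) to collapse all phases of $R_l$ into the single per-path joint reliability $P_u^{J,R_l}=\prod_{k\in R_l}P_u^{G_k}$, the probability that node $u$ stays activated through every phase of the branch; any many-to-many ($C$) phase on the path is handled by the phase-averaged approximation of Eqs.~(\ref{as3})--(\ref{as4}), which is tight as the link reliabilities approach $1$. Writing $P_u^{JF,R_l}=1-P_u^{J,R_l}$ for the per-path joint failure rate, the power-series method (Lemma \ref{rv_th_ps}) expands the branch-failure probability in the $P_u^{JF,R_l}$, and Corollary \ref{oneAsimp} retains only its first non-vanishing term, giving $P_F^{R_l}\approx Q_{f+1}^{R_l}=\sum_{|S|=f+1}\prod_{u\in S}P_u^{JF,R_l}$, which is exactly the summand in the claimed formula. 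Leaf phases that only require $f+1$ survivors rather than $n-f$ produce a leading term of order $Q_{n-f}$, which I would discard as higher order.

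The final and most delicate step is to glue the branches together. I would write the global consensus-failure event as the union $\bigcup_{l=1}^{L}F_l$, where $F_l$ is the event that branch $R_l$ fails (more than $f$ nodes drop out somewhere along $R_l$); inclusion--exclusion then gives $P_F=\sum_l P(F_l)-\sum_{l<l'}P(F_l\cap F_{l'})+\cdots$, and identifying $P(F_l)$ with $P_F^{R_l}$ from the previous step yields the leading term $\sum_l P_F^{R_l}$. The main obstacle is justifying that the cross terms are negligible: the branches are \emph{not} independent, because they share ancestor phases — in particular the common non-faulty-node event of phase $0$, which enters the product defining every $P_u^{JF,R_l}$. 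My argument would be to bound each pairwise intersection $P(F_l\cap F_{l'})$ by the neglected correlation, showing it vanishes faster than the retained sum precisely when the branch-specific (link) failure contributions dominate the shared-ancestor (node) contributions; equivalently, $P_C\approx\prod_l(1-P_F^{R_l})$ under near-independence of branches, whose first-order expansion is $P_F\approx\sum_l P_F^{R_l}$. This small-failure-rate regime is the same one in which Corollary \ref{oneAsimp} is accurate, so the two approximations are mutually consistent, and assembling them proves the claim. I would flag explicitly that the approximation degrades when node failures (shared across all branches) are not subdominant, since then the correlated cross terms become comparable to the main terms and the additive formula over-counts.
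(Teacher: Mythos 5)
Your Parts 1 and 2 coincide with the paper's own route: the paper likewise treats each root-to-leaf path as a first-order chain, collapses it with the joint-reliability method (Lemma \ref{rv_th_jfv}, including the averaging approximation of Eqs.~(\ref{as3})--(\ref{as4}) for $C$-phases) and keeps only the leading power-series term (Lemma \ref{rv_th_ps}, Corollary \ref{oneAsimp}); your explicit handling of the $M_j=f+1$ leaf phases (discarding their contribution as order $Q_{n-f}$) is a point the paper leaves implicit, and is a good catch. Where you genuinely diverge is the gluing step, and that is where your proposal has a gap. You decompose the failure event as the unconditional union $\bigcup_l F_l$ and claim the pairwise terms $P(F_l\cap F_{l'})$ ``vanish faster than the retained sum.'' They do not: every pair of branches shares at least phase $0$, and typically a longer prefix, so $P(F_l\cap F_{l'})\geq P(\text{shared prefix fails})$. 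If all per-node, per-phase failure rates have scale $\varepsilon$, this shared-prefix failure probability is of order $\binom{n}{f+1}(a\varepsilon)^{f+1}$ ($a$ being the prefix length), i.e., the \emph{same} order in $\varepsilon$ as the retained terms $P(F_l)$, smaller only by a constant factor $(a/c_l)^{f+1}$. So no asymptotic-order argument can justify the truncation of the inclusion--exclusion on the unconditional events; your own caveat about node failures concedes exactly this, but retreating to the regime where link contributions dominate shared-ancestor contributions proves a weaker statement than Theorem \ref{multiAas1}, which imposes no such dominance.

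The paper's Part 3 organizes the neglect differently, and this is precisely what its fork-by-fork conditioning buys. At each forked phase $a$ it writes $P_C=\sum_{S^{0\to a}}P(S^{0\to a})\,A\,B$, where $A$ and $B$ are the \emph{conditional} success probabilities of the two sub-branches given the activated set at the fork; this factorization is exact by the Markov property (Lemma \ref{rv_p_2}), because the two sub-branches use disjoint sets of link events once the fork's AN set is fixed. The product-to-sum step $AB\approx A+B-1$ (Eq.~(\ref{proof_approx_theo_2})) then discards only the conditional cross term $(1-A)(1-B)$, a genuine product of two branch-suffix failure probabilities and hence truly second order; the shared-prefix randomness is not discarded there but carried along in the outer summation and recombined exactly (Eq.~(\ref{proof_approx_theo_3})). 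The only first-order sacrifice the paper makes is the final, explicit replacement $\sum_{S^{0\to a}}P(S^{0\to a})\approx 1$, a bounded over-count of prefix failures that yields the same final formula as yours. To repair your route you should do the same thing: perform the union/inclusion--exclusion argument \emph{conditionally} on the AN set at each fork (equivalently, adopt the paper's recursive factorization), so that the neglected cross terms are conditional ones of second order, and isolate the prefix over-count as a single explicit approximation rather than burying it inside an incorrect claim about $P(F_l\cap F_{l'})$.
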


If we consider all the dependence relationships in $G$ as a graph where each phase is a node and each phase dependence relationship $r_j$ is an edge, all the dependence relationships will form a tree, where the $0$-th phase is the root node. 

In this proof, we first consider an approximation for a basic forked communication structure, then show that all the forked phases in the tree could use such a trick and obtain Theorem \ref{multiAas1}.

Consider the following basic forked communication structure $G_{forked}$. Let $a,b,c$ be indexes of three phases with $0\leq a\leq b \leq c$, where the $b$-th phase is dependent on the $a$-th phase and other phases follow the first-order Markov property. This means when $j=1,2,...,b-1$, $r_j=1$; when $j=b$, $r_j=b-a$; when $j=b+1,b+2,...,c$, $r_j=1$. We call the $a$-th phase a forked phase since both the $(a+1)$-th phase and the $b$-th phase are dependent on it. According to Theorem \ref{th1} and Lemma \ref{rv_th_jfv}, we have
\begin{equation}
\small
\label{proof_approx_theo_1}
\begin{split}
&P_C\approx \sum_{\substack{n\geq x_0 \geq n-f\\S_{ x_0}^{0->a}\subseteq \mathrm{\Omega}}}P(S_{ x_0}^{0->a})\\ &\sum_{\substack{x_{0}\geq x_1 \geq n-f\\S_{ x_1}^{a+1->b-1}\subseteq S_{ x_0}^{0->a}}}\sum_{\substack{x_{0}\geq x_2 \geq n-f\\S_{ x_2}^{b->c}\subseteq S_{ x_0}^{0->a}}} \\ & P(S_{ x_1}^{a+1->b-1}|S_{ x_0}^{0->a})P(S_{ x_2}^{b->c}|S_{ x_0}^{0->a})\\ &=\sum_{\substack{n\geq x_0 \geq n-f\\S_{ x_0}^{0->a}\subseteq \mathrm{\Omega}}}P(S_{ x_0}^{0->a})(\sum_{\substack{x_{0}\geq x_1 \geq n-f\\S_{ x_1}^{a+1->b-1}\subseteq S_{ x_0}^{0->a}}}P(S_{ x_1}^{a+1->b-1}|S_{ x_0}^{0->a})) \\ &(\sum_{\substack{x_{0}\geq x_2 \geq n-f\\S_{ x_2}^{b->c}\subseteq S_{ x_0}^{0->a}}}P(S_{ x_2}^{b->c}|S_{ x_0}^{0->a}))
\end{split}
\end{equation}
where $P(S_{ x_0}^{0->a}),P(S_{ x_1}^{a+1->b-1}|S_{ x_0}^{0->a})$ and $P(S_{ x_2}^{b->c}|S_{ x_0}^{0->a})$ can be obtained according to Eq.(\ref{rv_1A_eq_3}) and (\ref{rv_1A_eq_4}).
Note here by using Lemma \ref{rv_th_jfv}, $P_i^{0->a}=\prod_{j=0}^{a} P_i^{G_j}$, $P_i^{a+1->b-1}=\prod_{j=a+1}^{b-1} P_i^{G_j}$, and $P_i^{b->c}=\prod_{j=b}^{c} P_i^{G_j}$ are the activation probability of node $i$ in the phases from $0$ to $a$, from $a+1$ to $b-1$, and from $b$ to $c$, respectively. 

Let $A=\sum_{\substack{x_{0}\geq x_1 \geq n-f\\S_{ x_1}^{a+1->b-1}\subseteq S_{ x_0}^{0->a}}}P(S_{ x_1}^{a+1->b-1}|S_{ x_0}^{0->a})$ and \\ $B=\sum_{\substack{x_{0}\geq x_2 \geq n-f\\S_{ x_2}^{b->c}\subseteq S_{ x_0}^{0->a}}}P(S_{ x_2}^{b->c}|S_{ x_0}^{0->a})$.
Since the link failure rate is always close to 0, in Eq.(\ref{proof_4_1}) A and B are close to 1.  Thus we could obtain the approximation:
\begin{equation}
	\begin{split}
	\label{proof_approx_theo_2}
 AB&=(1-(1-A))(1-(1-B)) \\ & \approx 1-(1-A)-(1-B)=A+B-1
	\end{split}
\end{equation}
Thus Eq. (\ref{proof_approx_theo_1}) could be transformed as:
\begin{equation}
\small
\label{proof_approx_theo_3}
\begin{split}
&P_C\approx \sum_{\substack{n\geq x_0 \geq n-f\\S_{ x_0}^{0->a}\subseteq \mathrm{\Omega}}}\sum_{\substack{x_{0}\geq x_1 \geq n-f\\S_{ x_1}^{a+1->b-1}\subseteq S_{ x_0}^{0->a}}}P(S_{ x_0}^{0->a})P(S_{ x_1}^{a+1->b-1}|S_{ x_0}^{0->a})\\ &+\sum_{\substack{n\geq x_0 \geq n-f\\S_{ x_0}^{0->a}\subseteq \mathrm{\Omega}}}\sum_{\substack{x_{0}\geq x_2 \geq n-f\\S_{ x_2}^{b->c}\subseteq S_{ x_0}^{0->a}}}P(S_{ x_0}^{0->a})P(S_{ x_2}^{b->c}|S_{ x_0}^{0->a})\\ & -\!\!\!\!\sum_{\substack{n\geq x_0 \geq n-f\\S_{ x_0}^{0->a}\subseteq \mathrm{\Omega}}}P(S_{ x_0}^{0->a})\\ &=
\sum_{\substack{n\geq x_1 \geq n-f\\S_{ x_1}^{0->b-1}\subseteq \mathrm{\Omega}}}P(S_{ x_0}^{0->b-1})+\sum_{\substack{n\geq x_2 \geq n-f\\S_{ x_2}^{0->a,b->c}\subseteq \mathrm{\Omega}}}P(S_{ x_2}^{0->a,b->c})\\ &-\sum_{\substack{n\geq x_0 \geq n-f\\S_{ x_0}^{0->a}\subseteq \mathrm{\Omega}}}P(S_{ x_0}^{0->a})\\ &\approx 
\sum_{\substack{n\geq x_1 \geq n-f\\S_{ x_1}^{0->b-1}\subseteq \mathrm{\Omega}}}P(S_{ x_0}^{0->b-1})+\sum_{\substack{n\geq x_2 \geq n-f\\S_{ x_2}^{0->a,b->c}\subseteq \mathrm{\Omega}}}P(S_{ x_2}^{0->a,b->c})-1
\end{split}
\end{equation}
where for $P(S_{ x_0}^{0->b-1})$ and $P(S_{ x_2}^{0->a,b->c})$, we have the corresponding $P_i^{0->b-1}=\prod_{j=0}^{b-1} P_i^{G_j}$ and $P_i^{0->a,b->c}=(\prod_{j=0}^{a}P_i^{G_j})(\prod_{j=b}^{c} P_i^{G_j})$. 

Thus we could obtain the consensus failure rate as
\begin{equation}
\label{proof_approx_theo_4}
\begin{split}
&P_F=1-P_C\\& \approx
(1-\sum_{\substack{n\geq x_1 \geq n-f\\S_{ x_1}^{0->b-1}\subseteq \mathrm{\Omega}}}P(S_{ x_0}^{0->b-1})) \\ & +(1-\sum_{\substack{n\geq x_2 \geq n-f\\S_{ x_2}^{0->a,b->c}\subseteq \mathrm{\Omega}}}P(S_{ x_2}^{0->a,b->c}))
\end{split}
\end{equation}
Recall that all the dependence relationships form a tree, where the $0$-th phase is the root node. We could find that in Eq. (\ref{proof_approx_theo_4}), $0-1-2-,...,-(b-2)-(b-1)$ is exactly the path from the $0$-th phase (the root node) to the $(b-1)$-th phase (one leaf node) and $0-1-...-(a-1)-a-b-(b+1)-...-c$ is exactly the path from the $0$-th phase (the root node) to the $c$-th phase (another leaf node). 

Eq. (\ref{proof_approx_theo_1})-(\ref{proof_approx_theo_4}) show the approximation of a forked phase, i.e., the $a$-th phase. Regarding the tree formed by the dependence relationships in the communication structure $G$ for some protocol, we conduct similar operations from Eq. (\ref{proof_approx_theo_1}) to Eq. (\ref{proof_approx_theo_4}) for all the forked phases, thus we could obtain Eq. (\ref{multiAas1_1}).  By Corollary \ref{oneAsimp},  we obtain Eq. (\ref{multiAas2}) for each $P_F^{R_l}$,. The approximation will be tight if the activation probabilities are close to 1, which implies node/ link failure rates are close to 1. Thus we have proven Theorem \ref{multiAas1}.

\clearpage
\begin{strip}
\subsection{Part 4: Proof of Lemma \ref{rv_th_jfv}}
\label{proof of joint failure}
If $r_j=1$ and $M_j=n-f$ for any $j=1,...,|G|$, Eq. (\ref{eq_th1}) is transformed as: 
\begin{equation}
\label{proof_th1_1}
P_C=\sum_{\substack{n\geq x_0\geq x_1\geq ...\geq x_{\left|G\right|}\geq n-f\\ \mathrm{\Omega}\supseteq S_{ x_0}^{G_0}\supseteq S_{x_1}^{G_1}...\supseteq S_{x_{\left|G\right|}}^{G_{\left|G\right|}}}} P\left(S_{x_0}^{G_0},S_{x_1}^{G_1},\ldots,S_{x_{\left|G\right|}}^{G_{\left|G\right|}}\right)
\end{equation}
where 
\begin{equation}
\label{proof_th1_2}
P\left(S_{x_0}^{G_0},S_{x_1}^{G_1},\ldots,S_{x_{\left|G\right|}}^{G_{\left|G\right|}}\right)=P\left(S_{x_0}^{G_0}\right)\prod_{j=1}^{\left|G\right|}P(S_{x_j}^{G_j}|S_{x_{j-1}}^{G_{j-1}})
\end{equation}
\begin{equation}
\label{proof_th1_3}
P\left(S_{x_0}^{G_0}\right)=\prod_{u\in S_{x_0}^{G_0}} P_u^{G_0}\prod_{v\in\complement_\mathrm{\Omega} S_{x_0}^{G_0}}\left(1-P_v^{G_0}\right), \ \ 
P\left(S_{x_j}^{G_j}\middle| S_{x_{j-1}}^{G_{j-1}}\right)=\prod_{u\in S_{x_j}^{G_j }} P_u^{G_j}\prod_{v\in\complement_{S_{x_{j-1}}^{G_{j-1}}}S_{x_j}^{G_j}}\left(1-P_v ^{G_j}\right)
\end{equation}
By using Eq. (\ref{proof_th1_2}) and changing the summation order in Eq. (\ref{proof_th1_1}), we have:
\begin{equation}
\label{proof_2_1}
\begin{split}
P_C&=\sum_{\substack{n\geq x_0\geq x_1\geq ...\geq x_{\left|G\right|}\geq n-f\\ \mathrm{\Omega}\supseteq S_{ x_0}^{G_0}\supseteq S_{x_1}^{G_1}...\supseteq S_{x_{\left|G\right|}}^{G_{\left|G\right|}}}}P(S_{x_0}^{G_0})\prod_{j=1}^{|G|}P(S_{x_j}^{G_j}|S_{x_{j-1}}^{G_{j-1}})=\sum_{\substack{n\geq x_1\geq ...\geq x_{\left|G\right|}\geq n-f\\ \mathrm{\Omega}\supseteq S_{ x_1}^{G_1} ...\supseteq S_{x_{|G|}}^{G_{|G|}}}}\prod_{j=2}^{|G|}P(S_{x_j}^{G_j}|S_{x_{j-1}}^{G_{j-1}}) \sum_{\substack{n\geq x_0\geq x_1\\ \mathrm{\Omega}\supseteq S_{ x_0}^{G_0}\supseteq S_{x_1}^{G_1}}}P(S_{x_0}^{G_0})P(S_{x_1}^{G_1}|S_{x_0}^{G_0}) 
\end{split}
\end{equation}

Substitute Eq. (\ref{proof_th1_3}) into Eq. (\ref{proof_2_1}):
\begin{equation}
\label{proof_2_2}
\begin{split}
P_C&=\sum_{\substack{n\geq x_1\geq ...\geq x_{\left|G\right|}\geq n-f\\ \mathrm{\Omega}\supseteq S_{ x_1}^{G_1} \supseteq ... S_{x_{|G|}}^{G_{|G|}}}}\prod_{j=2}^{|G|}P(S_{x_j}^{G_j}|S_{x_{j-1}}^{G_{j-1}}) \sum_{\substack{n\geq x_0\geq x_1\\ \mathrm{\Omega}\supseteq S_{ x_0}^{G_0}\supseteq S_{x_1}^{G_1}}}\prod_{u_0\in S_{x_0}^{G_0 }} P_{u_0}^{G_0}\prod_{v_0\in\complement_{\Omega}S_{x_0}^{G_0}}\left(1-P_{v_0} ^{G_0}\right)\prod_{u_1\in S_{x_1}^{G_1 }} P_{u_1}^{G_1}\prod_{v_1\in\complement_{S_{x_{0}}^{G_{0}}}S_{x_1}^{G_1}}\left(1-P_{v_1} ^{G_1}\right)
\end{split}
\end{equation}

Considering the last summation, by using $\prod_{u_0\in S_{x_0}^{G_0 }} P_{u_0}^{G_0}=\prod_{u_1\in S_{x_1}^{G_1}} P_{u_1}^{G_0}\prod_{v_1\in \complement_{S_{x_0}^{G_0}}S_{x_1}^{G_1}} P_{v_1}^{G_0}$ we have
\begin{equation}
\label{proof_2_3}
\begin{split}
&\sum_{\substack{n\geq x_0\geq x_1\\ \mathrm{\Omega}\supseteq S_{ x_0}^{G_0}\supseteq S_{x_1}^{G_1}}}\prod_{u_0\in S_{x_0}^{G_0 }} P_{u_0}^{G_0}\prod_{v_0\in\complement_{\Omega}S_{x_0}^{G_0}}\left(1-P_{v_0} ^{G_0}\right)\prod_{u_1\in S_{x_1}^{G_1 }} P_{u_1}^{G_1}\prod_{v_1\in\complement_{S_{x_{0}}^{G_{0}}}S_{x_1}^{G_1}}\left(1-P_{v_1} ^{G_1}\right)\\&=\sum_{\substack{n\geq x_0\geq x_1\\ \mathrm{\Omega}\supseteq S_{ x_0}^{G_0}\supseteq S_{x_1}^{G_1}}}{\prod_{v_0\in\complement_\mathrm{\Omega}S_{x_0}^{G_0}}{(1-P_{v_0}^{G_0})} \prod_{u_1\in S_{x_1}^{G_1}}{P_{u_1}^{G_{0}}P_{u_1}^{G_1}\prod_{v_1\in\complement_{S_{x_0}^{G_0}}S_{x_1}^{G_1}}{P_{v_1}^{G_0}(1-P_{v_1}^{G_1}}})}
\\ & =\prod_{u_1\in S_{x_1}^{G_1}}{P_{u_1}^{G_0}P_{u_1}^{G_1}}(\sum_{\substack{n\geq x_0\geq x_1\\ \mathrm{\Omega}\supseteq S_{ x_0}^{G_0}\supseteq S_{x_1}^{G_1}}}\prod_{v_0\in\complement_\mathrm{\Omega}S_{x_0}^{G_0}}{(1-P_{v_0}^{G_0})}\prod_{v_1\in\complement_{S_{x_0}^{G_0}}S_{x_1}^{G_1}}{P_{v_1}^{G_0}(1-P_{v_1}^{G_1})})
\end{split}
\end{equation}

Since $\complement_{\mathrm{\Omega}}S_{x_1}^{G_1}=\complement_{\mathrm{\Omega}}S_{x_0}^{G_0}\cup\complement_{S_{x_0}^{G_0}}S_{x_1}^{G_1},\ \ \complement_{\mathrm{\Omega}}S_{x_0}^{G_0}\cap\complement_{S_{x_0}^{G_0}}S_{x_1}^{G_1}=\emptyset$ and the summation traversed all sets from $\mathrm{\Omega}$ to $S_{x_1}^{G_1}$, thus
\begin{equation}
\begin{split}
\label{proof_2_4}
&\prod_{u_1\in S_{x_1}^{G_1}}{P_{u_1}^{G_0}P_{u_1}^{G_1}}(\sum_{\substack{n\geq x_0\geq x_1\\ \mathrm{\Omega}\supseteq S_{x_0}^{G_0}\supseteq S_{x_1}^{G_1}}}\prod_{v_0\in\complement_\mathrm{\Omega}S_{x_0}^{G_0}}{(1-P_{v_0}^{G_0})}\prod_{v_1\in\complement_{S_{x_0}^{G_0}}S_{x_1}^{G_1}}{P_{v_1}^{G_0}(1-P_{v_1}^{G_1})})\\&=\prod_{u_1\in S_{x_1}^{G_1}}{P_{u_1}^{G_0}P_{u_1}^{G_1}}\prod_{ v_1\in\complement_\mathrm{\Omega}S_{x_1}^{G_1}}{((1-P_{v_1}^{G_0})+P_{v_1}^{G_0}(1-P_{v_1}^{G_1}))}=\prod_{u_1\in S_{x_1}^{G_1}}{P_{u_1}^{G_0}P_{u_1}^{G_1}}\prod_{v_1\in\complement_\mathrm{\Omega}S_{x_1}^{G_1}}{(1-{P}_{v_1}^{G_0}P_{v_1}^{G_{1}})}
\end{split}
\end{equation}

\clearpage
Thus substituting Eq. (\ref{proof_2_3}) and Eq. (\ref{proof_2_4}) into Eq. (\ref{proof_2_2}), $P_C$ can be transformed as:

\begin{equation}
\begin{split}
\label{proof_2_5}
P_C&=\sum_{\substack{n\geq x_1\geq ...\geq x_{\left|G\right|}\geq n-f\\ \mathrm{\Omega}\supseteq S_{ x_1}^{G_1} \supseteq S_{x_{|G|}}^{G_{|G|}}}}\prod_{j=2}^{|G|}P(S_{x_j}^{G_j}|S_{x_{j-1}}^{G_{j-1}})(\prod_{u_1\in S_{x_1}^{G_1}}{P_{u_1}^{G_0}P_{u_1}^{G_1}}\prod_{v_1\in\complement_\mathrm{\Omega}S_{x_1}^{G_1}}{(1-{P}_{v_1}^{G_0}P_{v_1}^{G_{1}})}) \\ &=\sum_{\substack{n\geq x_2...\geq x_{\left|G\right|}\geq n-f\\ \mathrm{\Omega}\supseteq S_{ x_2}^{G_2} \supseteq S_{x_{|G|}}^{G_{|G|}}}}\prod_{j=3}^{|G|}P(S_{x_j}^{G_j}|S_{x_{j-1}}^{G_{j-1}})\sum_{\substack{n\geq x_1\geq x_2\\ \mathrm{\Omega}\supseteq S_{ x_1}^{G_1}\supseteq S_{x_2}^{G_2}}}\prod_{u_1\in S_{x_1}^{G_1}}{P_{u_1}^{G_0}P_{u_1}^{G_1}}\prod_{v_1\in\complement_\mathrm{\Omega}S_{x_1}^{G_1}}{(1-P_{v_1}^{G_0}P_{v_1}^{G_{1}})}
\prod_{u_2\in S_{x_2}^{G_2}} P_{u_2}^{G_2}\prod_{v_2\in\complement_{S_{x_{1}}^{G_{1}}}S_{x_2}^{G_2}}\left(1-P_{v_2} ^{G_2}\right)
\end{split}
\end{equation}

Compared Eq. (\ref{proof_2_5}) with Eq. (\ref{proof_2_2}), we can obtain that by the similar operations of Eq. (\ref{proof_2_2})-(\ref{proof_2_4}) for $|G|-1$ times, it holds:
\begin{equation}
\label{proof_2_6}
\begin{split}
P_C&= \sum_{\substack{{n\geq x_{|G|}\geq n-f}\\ \mathrm{\Omega}\supseteq S_{x_{|G|}}^{G_{|G|}}}} {\prod_{u\in S_{x_{|G|}}^{G_{|G|}}} (\prod_{j=0}^{|G|}P_u^{G_j})\prod_{v\in\complement_\mathrm{\Omega}S_{x_{|G|}}^{G_{|G|}}}(1-\prod_{j=0}^{|G|}P_v^{G_j})}
\end{split}
\end{equation}
Considering $\prod_{j=0}^{|G|}P_u^{G_j}$ as the joint success rate $P_u^J$, Eq. (\ref{as1}) is obtained.

Note that when there is no graph C in the communication structure of the consensus protocol $G$, Eq. (\ref{as1}) is not approximated since all the derivations are identity transformations. However, if the consensus protocol has graph C in the $j$-th phase, according to Eq. (\ref{rv_1A_eq_5}), $P_i^{G_j}$ is determined by the set $S_{x_{j-r_j}}^{G_{j-r_j}}$, which will vary in different summation paths. This causes that the operations of Eq. (\ref{proof_2_3})-(\ref{proof_2_4}) cannot be held. Thus when the graph of $j$-th phase is C, we use the approximation form of Eq. (\ref{rv_1A_eq_5}):
\begin{equation}
\begin{split}
\label{proofas3}
\bar{P}_i^{G_j}= \sqrt[f+1]{\frac{\sum_{w=n-f-1}^{n-1}{(\bar{P}_i^{G_j}(w))^{f+1}}}{f+1}} 
\end{split}
\end{equation}
where $\bar{P}_i^{G_j}(w)$ is
\begin{equation}
\begin{split}
\label{proofas4}
\bar{P}_i^{G_j}(w)=\frac{\sum_{\Phi_{w}\in (\Omega\setminus \{i\})}{P_i^{G_j}(\{ \Phi_w, i\})} }{\binom{n-1}{w}}
\end{split}
\end{equation}
Here $\Phi_{w}$ is a running variable with $|\Phi_w|=w$, $P_i^{G_j}(\{ \Phi_w, i\})$ is the left side in Eq. (\ref{rv_1A_eq_5}) replacing $S_{x_{j-r_j}}^{G_{j-r_j}}$ as $\{ \Phi_w, i\}$. The approximation principle is to make $\bar{P}_i^{G_j}$ independent from $S_{x_{j-r_j}}^{G_{j-r_j}}$ by taking the average of $P_i^{G_j}(S_{x_{j-r_j}}^{G_{j-r_j}})$. Specifically, Eq. (\ref{proofas4}) is the average of $P_i^{G_j}(S_{x_{j-r_j}}^{G_{j-r_j}})$ for $|S_{x_{j-r_j}}^{G_{j-r_j}}|=w+1$ and Eq. (\ref{proofas3}) is the average of order $f+1$ of $\bar{P}_i^{G_j}(w)$. Due to the averaging, when there is graph C in the consensus protocol, Eq. (\ref{as1}) is approximate and the approximation will be tight if the link reliability is close to 1.

Thus Lemma \ref{rv_th_jfv} is proved.
\end{strip}
\clearpage

\section{Proof of Theorem \ref{the:RA}}
\label{proof of the:RA}

We consider the case of BFT where $f=\lfloor n/3\rfloor$ as an example. For the cases of $n=3f+1$ and $n=3f+2$ in BFT and the cases of $n=2f$ and $n=2f+1$ in CFT, the approximations are similar. 
According to Corollary \ref{degen_simp}, Since $p_{JF,R}$ is close to 0, we have
\begin{equation}
	\begin{split}
		\label{proof_4_1}
		p_{F}&\approx \binom{n}{f+1}p_{JF,R}^{f+1}\approx \binom{n}{f+1}p_{JF,R}^{f+1}(1-p_{JF,R})^{n-f-1}
	\end{split}
\end{equation}

For $n=3f$, by applying the Stirling formula, $n!\approx\sqrt{2\pi n}(\frac{n}{e})^{n}$, we have
\begin{equation}
	\begin{split}
		\label{proof_4_2}
		&{\rm log}p_F=(f+1){\rm log}p_{JF,R} +(2f-1){\rm log}(1-p_{JF,R})
  \\ & +{\rm log}(\frac{(3f)!}{(f+1)!(2f-1)!})\\ &\approx (f+1){\rm log}p_{JF,R} +(2f-1){\rm log}(1-p_{JF,R}) \\ & +{\rm log}(\sqrt{\frac{3}{f}}3^{3f}(\frac{1}{2})^{2f})
	\end{split}
\end{equation}

In Eq. (\ref{proof_4_2}), the linear term and non-linear term with respect to $f$ can be obtained as:
\begin{equation}
	\begin{split}
		\label{proof_4_3}
		{\rm log}p_F=&({\rm log}p_{JF,R}+2{\rm log}p_{J,R}+3{\rm log}3-2{\rm log}2)\cdot f\\ &+{\rm log}(\frac{\sqrt{3}p_{JF,R}}{\sqrt{\pi}(1-p_{JF,R})})-\frac{1}{2}{\rm log}(f)
	\end{split}
\end{equation}

It could be seen that $({\rm log}p_{JF,R}+2{\rm log}p_{J,R}+3{\rm log}3-2{\rm log}2)\cdot f+{\rm log}(\frac{\sqrt{3}p_{JF,R}}{\sqrt{\pi}(1-p_{JF,R})})$ is the linear part in Eq. (\ref{proof_4_2}) while $-\frac{1}{2}{\rm log}f$ is the non-linear part. The derivative of ${\rm log}p_{F}$ with respect to $f$ is ${\rm log}p_{JF,R}+2{\rm log}p_{J,R}+3{\rm log}3-2{\rm log}2-\frac{1}{2f}$. Since $p_{JF,R}$ is close to 0, the impact of $-\frac{1}{2f}$ on the linearity is minor. Thus the non-linear part $-\frac{1}{2}{\rm log}f$ could be neglected or only be considered as the complement term to decrease the approximation error.
For the cases of $n=3f+1$ and $n=3f+2$ in BFT and the cases of $n=2f$ and $n=2f+1$ in CFT, they could be similarly proved through the Stirling formula to obtain the linearity term. Thus Theorem \ref{the:RA} has been proved.

\end{document}